\documentclass[11pt]{amsart} 
\usepackage{amscd,amssymb,amsxtra}
\usepackage{mathrsfs}  
\DeclareSymbolFontAlphabet{\mathrsfs}{rsfs}
\usepackage[mathscr]{eucal} 
\usepackage{comment}
\usepackage{color}
\usepackage{bbm}
\usepackage{enumitem} 
\usepackage[colorlinks,backref=page,citecolor=refkey]{hyperref}
\usepackage{aliascnt}
\usepackage{spectralsequences}

\usepackage{marginnote}

\makeatletter
\let\@secnumfont\bfseries
\def\section{\@startsection{section}{1}%
  \z@{4\linespacing\@plus\linespacing}{\linespacing}%
  {\bfseries\centering}}
\def\introsection{\@startsection{section}{1}%
  \z@{3\linespacing\@plus\linespacing}{\linespacing}%
  {\bfseries\centering}}
\def\subsection{\@startsection{subsection}{2}%
   \z@{1.25\linespacing\@plus.7\linespacing}{.5\linespacing}%
   {\normalfont\bfseries}}
\def\subsectionsinline{\def\subsection{\@startsection{subsection}{2}%
  \z@{1\linespacing\@plus.7\linespacing}{-.5em}%
  {\normalfont\bfseries}}}

\makeatother

\numberwithin{equation}{section}

\newcommand{\mynewtheorem}[2]{
  \newaliascnt{#1}{equation}
  \newtheorem{#1}[#1]{#2}
  \aliascntresetthe{#1}
  \expandafter\def\csname #1autorefname\endcsname{#2}
}

\theoremstyle{definition}
\mynewtheorem{definition}{Definition}
\mynewtheorem{example}{Example}
\mynewtheorem{problem}{\color{blue}Problem}
\mynewtheorem{probsec}{\color{blue}Problem}
\mynewtheorem{exercise}{Exercise}
\mynewtheorem{question}{\color{blue}Question}
\mynewtheorem{project}{\color{blue}Project}
\mynewtheorem{construction}{Construction}
\mynewtheorem{task}{\color{blue}Task}
\mynewtheorem{otask}{\color{green}Obsolete Task}
\mynewtheorem{notation}{Notation}
\mynewtheorem{learn}{\color{blue}Learn}

\newtheorem*{definition*}{Definition}
\newtheorem*{example*}{Example}
\newtheorem*{problem*}{\color{blue}Problem}
\newtheorem*{probsec*}{\color{blue}Problem}
\newtheorem*{exercise*}{Exercise}
\newtheorem*{question*}{\color{blue}Question}
\newtheorem*{learn*}{\color{blue}Learn}
\newtheorem*{project*}{\color{blue}Project}
\newtheorem*{construction*}{Construction}
\newtheorem*{notation*}{Notation}

\theoremstyle{remark}
\mynewtheorem{note}{Note}
\mynewtheorem{remark}{Remark}
\mynewtheorem{data}{Data}

\newtheorem*{note*}{Note}
\newtheorem*{remark*}{Remark}
\newtheorem*{data*}{Data}

\theoremstyle{plain}
\mynewtheorem{theorem}{Theorem}
\mynewtheorem{corollary}{Corollary}
\mynewtheorem{lemma}{Lemma}
\mynewtheorem{proposition}{Proposition}
\mynewtheorem{conjecture}{Conjecture}
\mynewtheorem{claim}{Claim}
\mynewtheorem{proposal}{Proposal}
\mynewtheorem{conclusion}{Conclusion}
\mynewtheorem{hypothesis}{Hypothesis}
\mynewtheorem{assumption}{Assumption}

\newtheorem*{theorem*}{Theorem}
\newtheorem*{corollary*}{Corollary}
\newtheorem*{lemma*}{Lemma}
\newtheorem*{proposition*}{Proposition}
\newtheorem*{conjecture*}{Conjecture}
\newtheorem*{claim*}{Claim}
\newtheorem*{proposal*}{Proposal}
\newtheorem*{conclusion*}{Conclusion}
\newtheorem*{hypothesis*}{Hypothesis}
\newtheorem*{assumption*}{Assumption}

\newenvironment{proof*}[1][\proofname]{
  \begin{proof}[#1]}{  
\end{proof}}

\definecolor{refkey}{rgb}{0,.6,.4}

\renewcommand{\:}{\colon}

\newcommand{\CC}{{\mathbb C}}

\DeclareMathOperator{\Conf}{Conf}

\DeclareMathOperator{\End}{End}

\newcommand{\QQ}{{\mathbb Q}}

\newcommand{\RR}{{\mathbb R}}
\newcommand{\TT}{\mathbb T}
\DeclareMathOperator{\Spin}{Spin}

\DeclareMathOperator{\tr}{tr}

\newcommand{\ZZ}{{\mathbb Z}}

\newcommand{\chiup}{\raise.5ex\hbox{$\chi$}}

\newcommand{\dbar}{{\overline\partial}}

\DeclareRobustCommand{\mstrut}{^{\vphantom{1*\prime y\vee M}}}

\newcommand{\temsquare}{\raise3.5pt\hbox{\boxed{ }}}

\newcommand{\zmod}[1]{\ZZ/#1\ZZ}

\newcommand{\zt}{\zmod2}

\DeclareFontFamily{U}{mathx}{}
\DeclareFontShape{U}{mathx}{m}{n}{<-> mathx10}{}
\DeclareSymbolFont{mathx}{U}{mathx}{m}{n}
\DeclareMathAccent{\widehat}{0}{mathx}{"70}
\DeclareMathAccent{\widecheck}{0}{mathx}{"71}
 
\DeclareMathSymbol{\bigtimes}{1}{mathx}{"91}

\DeclareMathOperator{\SO}{SO}

\let\O\relax
\DeclareMathOperator{\O}{O}
\DeclareMathOperator{\U}{U}
\DeclareMathOperator{\SU}{SU}
\DeclareMathOperator{\GL}{GL}

\newcommand{\upplus}{^{>0}}
\newcommand{\upnn}{^{\ge0}} 
\newcommand{\Zp}{\ZZ\upplus}
\newcommand{\Znn}{\ZZ\upnn}
\newcommand{\Rp}{\RR\upplus}

\usepackage[all,2cell]{xy}
\usepackage{graphicx}\usepackage{epsf}

\usepackage{tikz-cd}

\definecolor{refkey}{rgb}{0,.8,.2}\definecolor{labelkey}{rgb}{1,0,0} 

\newcommand{\bmuu}{\mbox{$\raisebox{-.07em}{\rotatebox{9.9}
  {\tiny {\bf /}
  }}\hspace{-0.49em}\mu\hspace{-0.88em}\raisebox{-0.98ex}{\scalebox{2} 
  {$\color{white}\phantom{.}$}}\hspace{-0.416em}\raisebox{+0.88ex}
  {$\color{white}\phantom{.}$}\hspace{0.46em}$}} 
\newcommand{\bmut}{\bmu 2}
\newcommand{\bmu}[1]{\bmuu _{#1}}

\DeclareMathOperator{\Bord}{Bord}
\DeclareMathOperator{\Euler}{Euler}
\DeclareMathOperator{\Line}{Line}
\DeclareMathOperator{\Man}{Man}
\DeclareMathOperator{\Orient}{Orient}
\DeclareMathOperator{\Pfaff}{Pfaff}
\DeclareMathOperator{\Riem}{Riem}
\DeclareMathOperator{\Set}{Set}
\DeclareMathOperator{\Vect}{Vect}
\DeclareMathOperator{\fr}{fr}
\DeclareMathOperator{\pfaff}{pfaff}
\newcommand{\BGL}{B\!\GL}
\newcommand{\BN}{B\mstrut _{\nabla }}
\newcommand{\BO}{B\!\O}
\newcommand{\BSO}{B\!\SO}
\newcommand{\BSpin}{B\!\Spin}
\newcommand{\Btt}{\Bord_{\langle 2,3  \rangle}}
\newcommand{\Fl}{F_{\lambda }}
\newcommand{\ICx}{I\CC^\times }
\newcommand{\Kh}{K^{1/2}}
\newcommand{\MTSO}{MT\!\SO}
\newcommand{\MTSpin}{MT\!\Spin}
\newcommand{\Mant}{\Man_3}
\newcommand{\RZ}{\RR/\ZZ}
\newcommand{\SZ}[1]{\Sigma ^{#1}\ZZ}
\newcommand{\TLC}{\Theta ^{\textnormal{LC}}}
\newcommand{\YC}{YM$+$CS}
\newcommand{\Zl}{Z_{\lambda }}
\newcommand{\al}{\alpha \mstrut _\lambda }
\newcommand{\bFl}{\overline{F}_\lambda }
\newcommand{\bone}{\mathbbm{1}}
\newcommand{\fp}{\mathfrak{p}}
\newcommand{\gmh}{\gamma _{-1/2}}
\newcommand{\hC}{\widehat{C}}
\newcommand{\hF}{\widehat{F}}
\newcommand{\hM}{\widehat{M}}
\newcommand{\hZ}{\widehat{Z}}
\newcommand{\ha}{\widehat{\alpha }}
\newcommand{\hc}{\hat{c}}
\newcommand{\hta}{\widehat{\tau }}
\newcommand{\op}{^\textnormal{op}}

\newcommand{\sE}{\mathscr{E}}
\newcommand{\sFBN}[1]{\sF\!_{\BN#1}}
\newcommand{\sFCwwp}{\sF_{\Conf,w_1,w_2,p_1}}
\newcommand{\sFCww}{\sF_{\Conf,w_1,w_2}}
\newcommand{\sFRwp}{\sF_{\Riem,w_1,p_1}}
\newcommand{\sFRwwp}{\sF_{\Riem,w_1,w_2,p_1}}
\newcommand{\sFRww}{\sF_{\Riem,w_1,w_2}}
\newcommand{\sFRw}{\sF_{\Riem,w_1}}
\newcommand{\sFwp}{\sF_{w_1,p_1}}
\newcommand{\sFwwp}{\sF_{w_1,w_2,p_1}}
\newcommand{\sFww}{\sF_{w_1,w_2}}
\newcommand{\sFw}{\sF_{w_1}}
\newcommand{\sF}{\mathscr{F}}
\newcommand{\sP}{\mathcal{P}}
\newcommand{\sX}{\mathscr{X}}
\newcommand{\sY}{\mathscr{Y}}
\newcommand{\tV}{t\!\Vect}
\newcommand{\tal}{\tilde\alpha \mstrut _\lambda }
\newcommand{\wpo}{(w_1,\,p_1)}
\newcommand{\wwpo}{(w_1,\,w_2, \,p_1)}
\renewcommand{\SS}{\mathbb{S}}

\begin{document}

\abovedisplayskip18pt plus4.5pt minus9pt
\belowdisplayskip \abovedisplayskip
\abovedisplayshortskip0pt plus4.5pt
\belowdisplayshortskip10.5pt plus4.5pt minus6pt
\baselineskip=15 truept
\marginparwidth=55pt

\makeatletter
\renewcommand{\tocsection}[3]{%
  \indentlabel{\@ifempty{#2}{\hskip1.5em}{\ignorespaces#1 #2.\;\;}}#3}
\renewcommand{\tocsubsection}[3]{%
  \indentlabel{\@ifempty{#2}{\hskip 2.5em}{\hskip 2.5em\ignorespaces#1%
    #2.\;\;}}#3} 
\renewcommand{\tocsubsubsection}[3]{%
  \indentlabel{\@ifempty{#2}{\hskip 5.5em}{\hskip 5.5em\ignorespaces#1%
    #2.\;\;}}#3} 
\def\@makefnmark{%
  \leavevmode
  \raise.9ex\hbox{\fontsize\sf@size\z@\normalfont\tiny\@thefnmark}} 
\def\multfoot{\textsuperscript{\tiny\color{red},}}
\def\footref#1{$\textsuperscript{\tiny\ref{#1}}$}
\makeatother

\setcounter{tocdepth}{2}


 \title[3d Chern--Simons theories and $p_1$-structures]{The role of
$p_1$-structures in \\ 3-dimensional Chern--Simons theories}

 \author[D. S. Freed]{Daniel S.~Freed}
 \thanks{DSF is supported by the Simons Foundation Award 888988 as part of the
Simons Collaboration on Global Categorical Symmetries.  This work was performed
in part at Aspen Center for Physics, which is supported by National Science
Foundation grant PHY-2210452.}
 \address{Harvard University \\ Department of Mathematics \\ Science Center
Room 325 \\ 1 Oxford Street \\ Cambridge, MA 02138}
 \email{dafr@math.harvard.edu}

  \author[C. Teleman]{Constantin Teleman} 
  \thanks{CT is supported by the Simons Foundation Award 824143 as part of the
Simons Collaboration on Global Categorical Symmetries.}
  \address{Department of Mathematics \\ University of California \\ 970 Evans
 Hall \#3840 \\ Berkeley, CA 94720-3840}  
  \email{teleman@math.berkeley.edu}

 \dedicatory{To S.-T. Yau, stronger than ever at 75}
 \date{May 25, 2026}
 \begin{abstract} 
 Our recent paper~\cite{FST} with Claudia Scheimbauer uses the cobordism
hypothesis to construct fully local Chern--Simons theories.  Here we expose
some physics motivations: Yang--Mills plus Chern--Simons in the bosonic case
and the free Majorana--Weyl spinor field in the fermionic case.  We also give
expositions of tangential structures and invertible field theories, in
particular the `gravitational Chern--Simons theory' used by Witten to obtain
topological field theories from the underlying gauge theory.
 \end{abstract}
\maketitle

In the mid-late 1980s new invariants of links and 3-manifolds were introduced
in rapid succession, first by Jones~\cite{Jo1,Jo2} and others~\cite{FYHLMO}
using von Neumann algebras and more traditional methods, by Witten~\cite{W1}
using quantum field theory, and by Reshetikhin--Turaev~\cite{RT1,RT2} using
quantum groups.  Witten's approach inspired decades of mathematical research in
topological field theory that continues unabated.  (The survey~\cite{ABHH}
recounts the first 20~years of developments.)  Recently, together with Claudia
Scheimbauer~\cite{FST}, we used the cobordism hypothesis~\cite{Lu} to construct
fully local topological field theories that encompass the invariants introduced
by Witten--Reshetikhin--Turaev.  In this expository paper we revisit a few
field-theoretic aspects of this story.
 
We begin and end with nontopological quantum field theories that surround these
invariants.  In~\S\ref{sec:1} we consider 3-dimensional Yang--Mills theory with
a Chern--Simons term~\cite{Sch,DJT} and take a singular limit in which all
massive modes disappear.  While not precisely Witten's approach
in~\cite{W1}---there he studies pure Chern--Simons theory, though with some
regularization---this limit is expected to produce the same theory~\cite{W3}.
More precisely, the assertion that the limit is topological is only true for
the underlying projective theories: a slight metric dependence remains in the
limiting linear theory.  Witten trades this slight metric dependence for a
dependence on a tangential structure for which there are various choices: a
framing~\cite{W1}, a ``2-framing''~\cite{A}, or a $p_1$-structure~\cite{BHMV}.
(In~\cite[\S7]{FST} we introduce another possibility: a complex
$p_1$-structure.)  We explain this maneuver in terms of invertible field
theories.  In particular, we tensor the singular limit with a classical
Chern--Simons theory~$\gamma _{-c}$ that is a secondary invariant of a multiple
of the first Pontrjagin class, and for that reason $p_1$-structures are most
natural here.\footnote{In the context~\cite{FST} of the cobordism hypothesis,
framings are most natural, at least at first.}  The multiple is determined by
the \emph{central charge}~$c$, a rational number computed~\eqref{eq:59} from
the level of the Chern--Simons term.  The topological field theory (which in
topology is known as \emph{Chern--Simons theory}) obtained after tensoring
with~$\gamma _{-c}$ only ``knows'' the central charge modulo~24.
 
An important aspect of the physics is the 2-dimensional chiral conformal
Wess--Zumino--Witten theory~\cite{WZ,W4} that lives on the boundary of
topological Chern--Simons theory.  We do not discuss it here, but
in~\S\ref{sec:6} we take up another chiral 2-dimensional quantum field theory:
the free Majorana--Weyl spinor field.  We give three Wick-rotated variants.  In
the last we use an analog of the Witten maneuver to express it as the boundary
theory of a topological field theory, the topological field theory whose
partition function is the Adams $e$-invariant.  In fact, the Witten maneuver
here is an extension to invertible field theories of the Atiyah--Patodi--Singer
expression~\cite[\S4]{APS} for the Adams $e$-invariant.
 
The exposition highlights aspects of Wick-rotated field theory that require
attention and some further development.  In both~\S\ref{sec:1}
and~\S\ref{sec:6} we state carefully the domain of the Wick-rotated field
theories; that is, we specify precisely the sheaf of background fields.
Furthermore, it is important to evaluate theories in families of manifolds and
bordisms, and we make clear the sort of families we use (e.g., holomorphic or
smooth).  Finally, we do not delve into unitary structures in this paper,
though we point out where they enter, especially in~\S\ref{sec:6}.  (Unitarity
in the context of WZW conformal blocks remains an interesting issue;
see~\cite{Lo,BF} for recent works.)
 
The middle sections are mathematical expositions of tangential structures and
invertible field theory.  In~\S\ref{sec:2} after introducing tangential
structures in general, following Lashof~\cite{L}, we define $p_1$-structures.
\autoref{thm:11} relates local changes of various tangential structures.
\autoref{sec:4} takes up invertible field theories, mostly focusing on the
topological case in which an invertible field theory is a map out of a
Madsen--Tillmann spectrum.  In~\S\ref{sec:5} we turn to nontopological
invertible field theories, which are based on differential cohomology.  We
sketch the construction of the theory~$\gamma _c$, which physicists call
`gravitational Chern--Simons theory', a variation of a prime example
in~\cite{CS}. 
 
We thank Andr\'e Henriques, Mike Hopkins, and Greg Moore for discussions and
correspondence.

{\small
\def\reftext{References}
\renewcommand{\tocsection}[3]{%
  \begingroup 
   \def\tmp{#3}%
   \ifx\tmp\reftext
  \indentlabel{\phantom{1}\;\;} #3%
  \else\indentlabel{\ignorespaces#1 #2.\;\;}#3%
  \fi\endgroup}
\tableofcontents
}

   \section{Yang--Mills $+$ Chern--Simons}\label{sec:1}

We begin with the physics origin of Chern--Simons theory as a limit of a
Yang--Mills theory.  Namely, we define a family of nontopological
theories~$F_{e,\lambda }$ and take a singular limit $e\to \infty $.  The result
is (conjecturally) projectively topological, and we tensor with an invertible
field theory to obtain a linear topological theory~$Z_\lambda $.  We work in
Wick-rotated quantum field theory as axiomatized by Segal~\cite{S,KS}; see
also~\cite{We}.  A theory is a linear representation of a bordism category, and
to specify its flavor we use the following, specialized to dimension three.

  \begin{definition}[]\label{thm:12}
 \ 
 \begin{enumerate}[label=\textnormal{(\arabic*)}]

 \item $\Mant$ is the category whose objects are smooth 3-manifolds and whose
morphisms are local diffeomorphisms.

 \item A \emph{presheaf on $\Mant$} is a functor $\sF\:\Mant\op\to \Set$ to the
category of sets.

 \item A \emph{sheaf} is a presheaf that satisfies the usual gluing condition
for open covers~\cite{FH1}.

 \end{enumerate}
  \end{definition}

  \begin{remark}[]\label{thm:13}
 \ 
 \begin{enumerate}[label=\textnormal{(\arabic*)}]

 \item We encounter sheaves of groupoids as well as sheaves of sets.

 \item A \emph{locally constant} sheaf on~$\Mant$ is equivalent to a
3-dimensional tangential structure, as defined below in \autoref{thm:1}; see
\cite[\S24.4]{F1} for a sketch proof and additional exposition.

 \end{enumerate} 
  \end{remark} 

\noindent
 Examples of sheaves on $\Mant$ that we encounter map a ``test'' 3-manifold~$M$
to: 
  \begin{equation}\label{eq:25}
     \begin{aligned} \sF_{\Riem}\: M &\longmapsto \Riem(M),\qquad
      &&\textnormal{the set of Riemannian metrics}; \\
      \sF_{w_1}\:M&\longmapsto \Orient(M), \qquad &&\textnormal{the set of
      orientations}; \\ \sF_{w_1,w_2}\: M&\longmapsto \Spin(M), \qquad
      &&\textnormal{the groupoid of Spin structures}.\end{aligned} 
  \end{equation}
The latter two are locally constant sheaves.  The notation indicates the
Stiefel--Whitney class(es) that are trivialized.  We use Cartesian
products of these sheaves, such as $\sFRw=\sF_{\Riem}\times \sF_{w_1}$.

To each sheaf~$\sF$ corresponds a bordism category $\Btt(\sF)$.  An object is a
closed 2-manifold~$Y$ equipped with a cooriented embedding $Y\subset X$ in a
germ of a 3-manifold, together with an element of~$\sF(X)$.  That element, a
section of~$\sF$ over~$X$, is called a \emph{background field}.  A
(\!\emph{Wick-rotated\textnormal{)} 3-dimensional field theory over~$\sF$} is a
symmetric monoidal functor\footnote{The symmetric monoidal structures are
disjoint union~$\sqcup $ and tensor product~$\otimes $.}
  \begin{equation}\label{eq:26}
     F\:\bigl(\Btt(\sF),\sqcup \bigr)\longrightarrow \bigl(\tV,\otimes
     \bigr), 
  \end{equation}
where $\tV$ is the category of topological vector spaces and nuclear maps; see
\cite{S,KS,We,F1} for details and expanded exposition.  The theory~$F$ is
\emph{topological} if it factors through a locally constant sheaf~$\sF'$ via a
map $\sF\to \sF'$.

  \begin{remark}[]\label{thm:28}
 Definition~\eqref{eq:26} of a field theory only evaluates the theory on a
single manifold at a time.  This is not sufficient.  Rather, one should
``sheafify'' the definition over the category $\Man$ of smooth manifolds and
smooth maps.  In other words, one should evaluate a field theory on
\emph{families} of manifolds/bordisms parametrized by a smooth manifold~$S$.
(Stolz--Teichner emphasize this point in their survey~\cite[\S2.5]{ST}.)
For example, to a fiber bundle $Y\to S$ with fiber closed 2-dimensional
manifolds a 3-dimensional theory should assign a smooth complex vector bundle
over~$S$.  In a unitary theory this bundle will be equipped with a hermitian
metric and compatible covariant derivative.  There is also a holomorphic
version of this concept.  We encounter all of these variations
in~\S\ref{sec:6}. 
  \end{remark}

  \subsection{The physics origin of Chern--Simons theory}\label{subsec:1.1}

One believes~\cite{W3} that for each compact Lie group~$G$ there exists a
2-parameter family of Wick-rotated 3-dimensional field theories
  \begin{equation}\label{eq:27}
     F_{e,\lambda }\:\Btt(\sFRw)\longrightarrow \tV 
  \end{equation}
called \emph{Yang--Mills $+$ Chern--Simons} (abbreviated \YC).  (The
Chern--Simons term was introduced into 3-dimensional gauge theory
in~\cite{Sch,DJT}.)  Here $e\in \Rp$ is the \emph{coupling constant} and
$\lambda $~is a (cocycle for a) class in $H^4(BG;\ZZ)$ called the \emph{level}.
Fix a positive definite $G$-invariant inner product~$\langle -,- \rangle$
on~$\mathfrak{g}$.  There is a semiclassical description as a gauge theory with
Lagrangian
  \begin{equation}\label{eq:28}
     L_{e,\lambda }(A) = \frac{1}{4e^2}\langle F_A\wedge
     *F_A\rangle\;+\;\Gamma _\lambda (A),
  \end{equation}
where $A$~is a \emph{fluctuating} $G$-connection and $\Gamma _\lambda $~is the
Chern--Simons ``form''.  The Yang--Mills term uses the Riemannian metric; the
Chern--Simons term uses the orientation.  The Chern--Simons ``form'' is more
properly a differential cocycle, as we indicate in~\S\ref{sec:5}.
 
Assuming the level~$\lambda $ is nondegenerate,\footnote{The real image
$\lambda _{\RR}\in H^4(BG;\RR)$ is equivalent to a symmetric bilinear form on
the Lie algebra~$\mathfrak{g}$, and it is the nondegeneracy of this form to
which we refer.} the theory~$F_{e,\lambda }$ is thought to be \emph{gapped}.
One piece of evidence: if $G$~is abelian, then one can do an easy analysis of
the free relativistic theory in Minkowski spacetime and compute the gap
explicitly; see~\cite[Homework~FP4]{W2}.  The expected gap leads to an expected
singular long distance limit which is a linear theory whose projectivization is
a \emph{topological} theory.  (See~\cite{F2,vD} for more on projective field
theories and anomalies; the lecture notes \cite[\S36]{F1} contain a more
expansive exposition of this story.)  These expectations are laid out in the
following, inspired by~\cite{W3} and~\cite[\S2]{W1}; see
also~\cite[\S\S2.2.\{3,10,17\}]{Mo} for $G=\U_1$.

  \begin{conjecture}[]\label{thm:14}
 \ 
 \begin{enumerate}[label=\textnormal{(\arabic*)}]

 \item The Lagrangian~\eqref{eq:28} determines a family of Wick-rotated field
theories. 

 \item The singular limit $e\to \infty $ exists and defines a field theory 
  \begin{equation}\label{eq:29}
     \Fl\:\Btt(\sFRw)\longrightarrow \tV. 
  \end{equation}

 \item The projectivization~$\bFl$ factors through~$\Btt(\sFw)$, hence it is a
projective \emph{topological} field theory.

 \item The projectivity (anomaly)~$\al$ of~$\bFl$ extends to~$\tal$, a
4-dimensional invertible field theory of oriented manifolds whose partition
function on a closed oriented 4-manifold~$W$ is
  \begin{equation}\label{eq:61}
     \tal(W) = \exp\left( \frac{2\pi ic(\lambda )}{24}\,\langle p_1(W),[W]
     \rangle\right) , 
  \end{equation}
where $c=c(\lambda )\in \QQ$ is the \emph{central charge} associated to the
level~$\lambda $.  \textnormal{(}The formula for~$c(\lambda )$ is~\eqref{eq:59}
below.\textnormal{)}

 \end{enumerate} 
  \end{conjecture}

  \begin{remark}[]\label{thm:24}
 \ 
 \begin{enumerate}[label=\textnormal{(\arabic*)}]

 \item Statement~(3) implies that the states spaces of~$\Fl$ are all finite
dimensional.

 \item The projectivity, or anomaly, of a 3-dimensional quantum field theory is
a once-categorified invertible 3-dimensional theory~\cite{F2}.  In this case it
is topological---it factors through a theory over~$\sFw$---and as stated in
\autoref{thm:14}(4) the anomaly theory~$\al$ extends to a 4-dimensional
invertible theory~$\tal$.  See~\S\ref{sec:4} for more on invertible theories.

 \item We discuss the central charge in~\S\ref{subsec:6.1}.  In particular, the
formula for ~$c(\lambda )$ in terms of~$\lambda $ is~\eqref{eq:59}.  From this
formula it is manifest that $c(\lambda )$~is a rational number.  
 
 \item The anomaly theory~$\al$ and its extension~$\tal$ only depend on
$c(\lambda )\pmod{24}$.
 
 \item The first Pontrjagin class enters the story in~\eqref{eq:61}, and this
makes the choice of $\wpo$-structures later on more natural than other
tangential structures.

 \item The topological projective theory~$\bFl$ is essentially the picture of
Chern--Simons theory put forward by Walker~\cite{Wa} long ago; see~\cite{H} for
a modern account.  Walker uses the quantum group construction of the
topological field theory; see \autoref{thm:15}(5) below.

 \item A version of this conjecture for $G$~a torus is discussed in detail
in~\cite[\S4]{GMMS}.  

 \end{enumerate}
  \end{remark}

Next we explain Witten's maneuver~\cite[(2.23)]{W1} to construct a lift of the
projective theory~$\bFl$ to a linear \emph{topological} field theory.  (The
singular limit~$F_\lambda $ is a linear \emph{nontopological} lift of~$\bFl$.)
This uses another locally constant sheaf, $\sF_{p_1}$, equivalent to the
tangential structure built from trivializing the first Pontrjagin class; we
introduce it in~\S\ref{subsec:2.3}.\footnote{In this context $p_1$-structures
first appeared in~\cite{BHMV}.  Atiyah~\cite{A} used ``2-framings'' instead of
$p_1$-structures.  We will also consider framings and stable framings
in~\S\ref{sec:2}.  But, as already stated, $p_1$-structures are most natural
here.}  The maneuver uses the commutative diagram
  \begin{equation}\label{eq:30}
     \begin{gathered} \xymatrix@C-1pc{&\sFRwp\ar[dl]\ar[dr] \\ \sFRw \ar[dr]
     && \sFwp\ar[dl] \\ &\sFw} \end{gathered}
  \end{equation}
of sheaves on $\Mant$.  (The maps are projections off of Cartesian products.)
The key ingredient is a family of \emph{invertible} field theories 
  \begin{equation}\label{eq:31}
     \gamma _c\:\Btt(\sFRwp)\longrightarrow \Line_{\CC},\qquad c\in \RR, 
  \end{equation}
based on the intrinsic\footnote{Physicists call this the ``gravitational
Chern--Simons invariant''.  It is a secondary invariant associated to the first
Pontrjagin class of the \emph{tangent bundle}.  We sketch the construction
in~\S\ref{sec:5}.} Chern--Simons invariant of a Riemannian
3-manifold~\cite{CS}.  Witten's observation is that the theory
  \begin{equation}\label{eq:32}
     \Zl = \Fl\otimes \gamma _{-c(\lambda )} ,
  \end{equation}
initially defined over~$\sFRwp$, descends to a linear topological field theory
over~$\sFwp$: the metric dependence cancels out.   

  \begin{remark}[]\label{thm:15}
 \ 
 \begin{enumerate}[label=\textnormal{(\arabic*)}]

 \item The projective theory $\bFl\bigm/ \sFw$ has two linear lifts in this
story: (i)~the long distance limit $\Fl\bigm/\sFRw$ of \YC, and (ii)~the
topological field theory $\Zl\bigm/\sFwp$.

 \item If $\beta $~is an \emph{invertible} 3-dimensional topological field
theory of $\wpo$-manifolds, then $\Zl\otimes \beta $ is also a linear lift
of~$\bFl$.  As we will see in~\S\ref{subsubsec:4.1.3}, isomorphism classes of
such~$\beta $ form an abelian group isomorphic to~$\bmu6$.  This is one
explanation for the $\bmu6$-extension of the Witt group that appears
in~\cite{FST}.

 \item The invertible field theories in~(2) are not assumed to be unitary.  If
we impose unitarity, then the $\bmu6$~indeterminacy of the linear lift
of~$\bFl$ reduces to a $\bmu3$~indeterminacy; see~\eqref{eq:45}.  This is
closely related to the distinction between 3-dimensional $\wpo_3$-structures
and stable $\wpo_s$-structures that we make
in~\S\ref{subsec:2.3}; see~\cite{FH2}.

 \item The topological theory~$\Zl$ admits a unitary structure, as do~$\Fl$
and~$\gamma _{-c(\lambda )}$.  However, the isomorphism~\eqref{eq:32} need not
preserve these unitary structures.

 \item An extension of~$\Zl$ to a (1,2,3)-theory is what is usually called
(quantum) \emph{Chern--Simons theory} in topology.  It was constructed
rigorously by Reshetikhin--Turaev~\cite{RT2,Tu} starting from quantum group
data.  More generally, one can begin with a modular tensor category and
construct a (1,2,3)-\emph{Reshetikhin--Turaev Theory}.  In ~\cite{FST} we
construct a fully local\footnote{a (0,1,2,3)-theory} extension of
Reshetikhin--Turaev theories using the cobordism hypothesis.  This applies
to~$\Zl$. Of course, we can then tensor with~$\gamma _{c(\lambda )}$ to
rigorously construct what should be the long distance limit of \YC\ as a fully
extended theory.

 \item The topological field theory $\Zl\bigm/\sFwp$ only sees $c\pmod{24}$.
There are further statements along these lines in~\cite{FST}.  On the other
hand, $\gamma _c$~requires a specification of~$c\in \RR$, hence
by~\eqref{eq:32} so too does the limit $\Fl\bigm/\sFRw$ of \YC.
 
 \item For all $N\in \ZZ$, the theory~$\gamma \mstrut _{24N}$ does not depend
on a $p_1$-structure; it factors through~$\sFRw$.  The tensor product
$F_{e,\lambda }\otimes \gamma \mstrut _{24N}$ has the same underlying
projective theory as~$F_{e,\lambda }$, hence so too the same singular limit
$e\to \infty $ as a projective theory.  This is another reason why we only
expect the topological field theories constructed from \YC\ to have a central
charge in~$\QQ/24\ZZ$ rather than in~$\QQ$.
 
 \item We execute the Witten maneuver for a free spinor field
in~\S\ref{subsec:6.4}. 

 \end{enumerate}
 \end{remark}

  \subsection{The Spin variation}\label{subsec:1.2}

There are Chern--Simons theories whose level~$\lambda $ lies not in ordinary
cohomology, but rather in a slightly exotic cohomology theory.  Even on the
classical level these theories require a Spin structure, not just an
orientation.  Aspects of Spin Chern--Simons theory are studied
in~\cite{BM,B,J1,J2,BMo}.  In these theories the analog of the modular tensor
category is enriched over the category of super vector spaces and there are
other new features.  In~\cite{FST} we consider Spin theories as well as non-Spin
theories.  Here, in~\S\ref{sec:6}, we discuss another Spin theory: the free
2-dimensional spinor field and its anomaly, in various guises.
 
The key change over~\S\ref{subsec:1.1} is that~\eqref{eq:30} is replaced with
  \begin{equation}\label{eq:33}
     \begin{gathered} \xymatrix@C-1pc{&\sFRwwp\ar[dl]\ar[dr] \\ \sFRww \ar[dr]
     && \sFwwp\ar[dl] \\ &\sFww} \end{gathered} 
  \end{equation}
Each sheaf includes a Spin structure: a trivialization of both~$w_1$ and~$w_2$.
Now \YC\ is defined over Riemannian Spin manifolds, and the projective long
distance limit~$\bFl$ is defined over Spin manifolds.  The tangential structure
derived from~$\sFwwp$ is defined in~\S\ref{subsec:2.4}.

  \begin{remark}[]\label{thm:16}
 \ \begin{enumerate}[label=\textnormal{(\arabic*)}]

 \item The anomaly theory, as a theory of $\Spin_3$-manifolds, only depends on
$c(\lambda )\pmod6$.  (Compare \autoref{thm:24}(4).)

 \item The indeterminacy of \autoref{thm:15}(2) due to the cyclic group~$\bmu6$
of invertible $\wpo_3$-theories is now a $\bmu{48}\times \bmut$~indeterminacy
from invertible $\wwpo_3$-theories; see~\S\ref{subsubsec:4.1.4}.

 \item The theory~$\gamma _c$ is the same as in~\S\ref{subsec:1.1}: there is no
Spin structure dependence introduced. 

 \end{enumerate}
  \end{remark}

  \subsection{A formula for the central charge}\label{subsec:6.1}

The choice of~$c(\lambda )\in \QQ$---the central charge---in~\eqref{eq:32} is
given as follows, assuming that $\lambda $ is {positive definite} (as
defined shortly).  

Let $G$~be a compact Lie group and let $\lambda $ be a cocycle for a class in
$H^4(BG;\ZZ)$.  The real image $[\lambda _{\RR}]\in H^4(BG;\RR)$ is
equivalently a $G$-invariant symmetric bilinear form
  \begin{equation}\label{eq:56}
     \langle -,- \rangle_\lambda \:\mathfrak{g}\times
     \mathfrak{g}\longrightarrow \RR. 
  \end{equation}
We say that $\lambda $~is \emph{positive definite} if this form is positive
definite, which we now assume.  The negative of one-half the Killing form is a
nondegenerate $G$-invariant symmetric bilinear form
  \begin{equation}\label{eq:57}
     \langle -,- \rangle_h\:\mathfrak{g}\times \mathfrak{g}\longrightarrow
     \RR. 
  \end{equation}
Here $[2h]\in H^4(BG;\ZZ)$ is the first Pontrjagin class of the adjoint
representation.  Define the symmetric endomorphism $S_\lambda \in
\End(\mathfrak{g})$ by
  \begin{equation}\label{eq:58}
     \langle \xi _1,\xi _2 \rangle_\lambda = \langle \xi _1,S_\lambda (\xi
     _2) \rangle_{\lambda +h},\qquad \xi _1,\xi _2\in \mathfrak{g}. 
  \end{equation}
Then the central charge is the trace
  \begin{equation}\label{eq:59}
     c(\lambda ) = \tr(S_\lambda ). 
  \end{equation}
For example, if $G=\SU_2$ and the level~$\lambda $ is $k$~times the positive
generator of $H^4(BG;\ZZ)$, then \eqref{eq:59}~reduces to the usual formula
  \begin{equation}\label{eq:60}
     c(\lambda )= \frac{3k}{k+2}. 
  \end{equation}
Equation~\eqref{eq:59} can be derived from the Segal--Sugawara construction of
the boundary WZW~theory.  

   \section{Tangential structures}\label{sec:2}
 
Let $\BO_n$ be a choice of classifying space of the orthogonal group.  The
inclusions $\O_1\hookrightarrow \O_2\hookrightarrow \cdots$ induce maps
$\BO_1\longrightarrow \BO_2\longrightarrow \cdots$; the colimit of this
sequence of topological spaces is denoted~$\BO$.  The following definition was
introduced by Lashof~\cite{L}.

  \begin{definition}[]\label{thm:1}
 \ 
 \begin{enumerate}[label=\textnormal{(\arabic*)}]

 \item For $n\in \Zp$ an \emph{$n$-dimensional tangential structure} is a
continuous map $\pi_n\:\sX_n\to \BO_n$.

 \item A \emph{stable tangential structure} is a continuous map $\pi \:\sX\to
\BO$. 
 \end{enumerate}

  \end{definition}

\noindent
 Equivalently, an $n$-dimensional tangential structure is a
pair~$(\sX_n,\sE_n)$, where $\sX_n$~is a topological space and $\sE_n\to \sX_n$
is a rank~$n$ real vector bundle; there is a similar formulation in the stable
case.  A stable tangential structure induces an $n$-dimensional tangential
structure by pullback: 
  \begin{equation}\label{eq:1}
     \begin{gathered} \xymatrix{\sX_n\ar@{-->}[r]^{} \ar@{-->}[d]_{\pi_n} &
     \sX\ar[d]^{\pi} \\ \BO_n\ar[r]^{} & \BO} \end{gathered} 
  \end{equation}
Similarly, an $n$-dimensional tangential structure induces an $m$-dimensional
tangential structure for all~$m<n$. 

  \begin{definition}[]\label{thm:2}
  Suppose an $n$-dimensional tangential structure $\pi_n\:\sX_n\to \BO_n$ is
given.  
 \begin{enumerate}[label=\textnormal{(\arabic*)}]

 \item A \emph{$\pi_n$-structure} on a smooth $n$-manifold~$M$ is a lift~$s_n$
of a classifying map\footnote{The classifying map is a contractible choice: the
category of smooth manifolds and smooth maps is equivalent to a category whose
objects are pairs of a smooth manifold and a choice of classifying map for its
tangent bundle.} of its tangent bundle:
  \begin{equation}\label{eq:2}
     \begin{gathered} \xymatrix@C+1pc{&\sX_n\ar[d]^{\pi_n} \\
     M\ar[ur]^{s_n}\ar[r]^{\tau _M} & \BO_n} \end{gathered}
  \end{equation}
Two $\pi _n$-structures are \emph{isomorphic} if the lifts are homotopic.

 \item Suppose $\pi_n$~is a principal fibration with fiber~$F$.  A \emph{change
of $\pi_n$-structure} on~$M$ is a map $M\to F$.  A change of \emph{isomorphism
class} of $\pi_n$-structure is a homotopy class of maps $M\to F$.

 \end{enumerate}

  \end{definition}

\noindent
 There is a space of $\pi _n$-structures.  If $p\:\sX\to \BO$ is a stable
tangent structure, there are analogous definitions of a $\pi$-structure and a
change of $\pi$-structure on a smooth manifold.   

  \begin{lemma}[]\label{thm:25}
 Let $\pi_n\:\sX_n\to \BO_n$ be induced from $\pi\:\sX\to \BO$, as
in~\eqref{eq:1}, and suppose $M$~is an $n$-manifold.  Then there is a
homeomorphism between the space of $\pi_n$-structures on~$M$ and the space of
$\pi$-structures on~$M$. 
  \end{lemma}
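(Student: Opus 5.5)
The plan is to use the universal property of the pullback square~\eqref{eq:1}. First I would fix the model: take $\sX_n$ to be the topological fiber product $\BO_n\times _{\BO}\sX$, meaning the subspace of pairs $(b,x)$ with $i(b)=\pi(x)$, where $i\:\BO_n\to\BO$ is the stabilization map. Then $\pi_n$ is the projection to $\BO_n$, and we write $q\:\sX_n\to\sX$ for the other projection. (If one only asks for a homotopy pullback, the result holds up to homotopy equivalence. To get an honest homeomorphism I would insist on the strict fiber product, or replace $\pi$ by a fibration first. The statement as written suggests the strict fiber product.) Next I would fix the classifying map $\tau _M\:M\to\BO_n$ of~\eqref{eq:2}. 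The stable classifying map of $TM$ is then $i\circ\tau _M$, so a $\pi$-structure on~$M$ is a lift $s\:M\to\sX$ with $\pi\circ s=i\circ\tau _M$.

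The space of $\pi_n$-structures is the subspace $\{s_n : \pi_n\circ s_n=\tau _M\}$ of the mapping space $\operatorname{Map}(M,\sX_n)$, with the compact-open topology. The space of $\pi$-structures is the subspace $\{s : \pi\circ s=i\circ\tau _M\}$ of $\operatorname{Map}(M,\sX)$. I would define
\[
\Phi(s_n)=q\circ s_n,\qquad \Psi(s)=(\tau _M,s)\:M\to \BO_n\times _{\BO}\sX .
\]
Then $\Psi(s)$ lands in $\sX_n$ exactly because $\pi\circ s=i\circ\tau _M$, and $\pi_n\circ\Psi(s)=\tau _M$. We also get $\Phi\circ\Psi=\mathrm{id}$ directly. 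For $\Psi\circ\Phi=\mathrm{id}$, note that a lift $s_n$ has first component $\pi_n\circ s_n=\tau _M$, so $s_n=(\tau _M,q\circ s_n)$. This is just the universal property of the fiber product restricted to lifts of the fixed map~$\tau _M$.

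It remains to check continuity. $\Phi$ is postcomposition with~$q$, which is continuous in the compact-open topology. For $\Psi$, I would use that $\operatorname{Map}(M,\BO_n\times\sX)\cong\operatorname{Map}(M,\BO_n)\times\operatorname{Map}(M,\sX)$ homeomorphically. The fiber product is a subspace, and the compact-open topology on maps into a subspace is the subspace topology. So $s\mapsto(\tau _M,s)$ is continuous, being constant in the first factor.

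The only real obstacle is the point-set bookkeeping in that last step, together with making precise that $\sX_n$ is the strict pullback. The compact-open identifications for products and subspaces hold for arbitrary spaces, so no hypothesis on~$M$ is needed. The dependence on the contractible choice of~$\tau _M$ is the same on both sides and so does not affect the argument.
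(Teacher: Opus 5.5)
Your proof is correct and is the same argument as the paper's: the paper's proof is the one-line observation that the right-hand square in~\eqref{eq:3} is a pullback, so lifts of $\tau_M$ to $\sX_n$ correspond to lifts of the composite $M\to\BO$ to $\sX$. You additionally supply the point-set details the paper leaves implicit, namely taking the strict fiber product as the model for $\sX_n$ and checking that $\Phi$ and $\Psi$ are continuous in the compact-open topology.
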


  \begin{proof}
 In the diagram
  \begin{equation}\label{eq:3}
     \begin{gathered} \xymatrix@C+=3pc{& \sX_n\ar[r]^{} \ar[d]_{\pi_n} &
     \sX\ar[d]^{\pi} \\ M\ar@{-->}[ur]^{s_n}
     \ar@{-->}[urr]^<<<<<<<<<<<<<<<<<<<<<<<<s \ar[r]^<<<<<<<<<{\tau
     _M}&\BO_n\ar[r]^{} & \BO} \end{gathered}
  \end{equation}
the right hand square is a pullback.
  \end{proof}

  \begin{remark}[]\label{thm:4}
 \ 
 \begin{enumerate}[label=\textnormal{(\arabic*)}]

 \item For some tangential structures more rigid models are possible, such as
for framings, orientations, and Spin structures.  Rigid models---reductions of
the principal bundle of frames of a smooth $n$-manifold---exist for
$n$-dimensional tangential structures $BG_n\to \BGL_n\!\RR$ that are induced
from a homomorphism $G_n\to \GL_n\!\RR$ of \emph{Lie groups}.  (In the
preceding we have replaced~$\GL_n\!\RR$ by the homotopy equivalent maximally
compact subgroup~$\O_n\subset \GL_n\!\RR$.)

 \item In \autoref{thm:2}(2) the entire $\pi_n$-structure is changed; all that
is left fixed is the underlying smooth manifold.  We will later consider
changes of structure in which an underlying orientation or Spin structure is
left fixed.

 \end{enumerate}
 \end{remark}

For the purposes of this paper, we set~$n=3$ and introduce the relevant
tangential structures.

  \subsection{Framings}\label{subsec:2.1}

Let $*$~denote the singleton topological space.  A \emph{framing} or
\emph{3-framing} is the tangential structure\footnote{One can replace $*$ with
a contractible space~$E\!\O_3$ on which $\O_3$~acts freely.} 
  \begin{equation}\label{eq:4}
     \fr_3\:*\longrightarrow \BO_3 
  \end{equation}
A \emph{stable framing} is the tangential structure 
  \begin{equation}\label{eq:5}
     \fr\:*\longrightarrow \BO 
  \end{equation}
Observe that \eqref{eq:4}~is truly unstable: while it is induced from a
4-dimensional tangential structure, it is not induced from a 5-dimensional
tangential structure.\footnote{If it were, then in the 5-dimensional tangential
structure $\sX_5\to \BO_5$ the space $\sX_5$~would be a delooping of the
Stiefel manifold $O_5/O_3$, and this can be proved not to exist
using~\cite{Br}.} A 3-framing on a 3-manifold~$M$ is also called a
\emph{parallelism}.  A change of parallelism (\autoref{thm:2}(2)) is effected
by a map $M\to \O_3$; if it preserves the induced orientation then it is a map
$M\to \SO_3$.  This can be seen from the rigid model of~$s$ as a global basis
of the tangent bundle, or from the diagram
  \begin{equation}\label{eq:6}
     \begin{gathered} \xymatrix@C+2pc{&\O_3\ar[d] \\ & \ast\ar[d]^{\fr_3} \\
     M\ar@{-->}[uur] \ar[ur]^<<<<<<<<<<<<<s \ar[r]^<<<<<<<<<<<{\tau _M} &
     \BO_3} \end{gathered}
  \end{equation}
A similar comment applies to stable framings: orientation-preserving changes
are maps $M\to \SO$.

  \begin{remark}[]\label{thm:5}
 A framing induces a stable framing, as in~\eqref{eq:3}, but not every stable
framing comes from a framing.  For example, $S^2$~admits stable framings
(unique up to homotopy), but the hairy ball theorem obstructs the existence of
framings.  On the other hand, $S^3$~admits 3-framings.
  \end{remark}

  \begin{lemma}[]\label{thm:6}
 The map 
  \begin{equation}\label{eq:7}
     \pi _3\SO_3\longrightarrow \pi _3\SO 
  \end{equation}
from orientation-preserving changes of framings of~$S^3$ to
orientation-preserving changes of stable framings of~$S^3$ is multiplication
by~2. 
  \end{lemma}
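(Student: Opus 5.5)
We need to show that the map $\pi_3\SO_3\to\pi_3\SO$ induced by the stabilization $\SO_3\hookrightarrow\SO_4\hookrightarrow\SO_5\hookrightarrow\cdots$ is multiplication by~2, with $\pi_3\SO_3\cong\ZZ\cong\pi_3\SO$. First we identify $\pi_3\SO_3$ via the double cover $\Spin_3=\SU_2\cong S^3\to\SO_3$, so $\pi_3\SO_3\cong\pi_3 S^3\cong\ZZ$, generated by the covering map. Likewise $\pi_3\SO\cong\pi_3\SO_5\cong\pi_3\Spin_5=\pi_3 Sp_2\cong\ZZ$, since stabilization $\SO_n\to\SO_{n+1}$ is an isomorphism on $\pi_3$ for $n\ge5$ (fiber $S^n$). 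So everything reduces to the map $\pi_3\Spin_3\to\pi_3\Spin_5$, i.e.\ $\pi_3 Sp_1\to\pi_3 Sp_2$ through $\Spin_4=Sp_1\times Sp_1$.

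The cleanest way to compute this is with the Dynkin index. For a simple simply connected compact group~$H$, $\pi_3H\cong H^4(BH;\ZZ)\cong\ZZ$, and a homomorphism $K\to H$ of such groups induces multiplication by the Dynkin index on~$\pi_3$. So it suffices to compute the index of $\Spin_3\to\Spin_5$, and a convenient way is to pull back a known class. The vector representation of $\Spin_5$ has $p_1$ equal to twice the generator of $H^4(B\Spin_5)$ ($p_1=2\cdot\tfrac{p_1}{2}$, and $\tfrac{p_1}2$ generates in the stable range). Its restriction to $\Spin_3=\SU_2$ is the 3-dimensional adjoint representation plus two trivial summands, and $p_1$ of the adjoint representation of $\SU_2$ is $-c_2(\mathfrak{sl}_2\otimes\CC)=4\,c_2(\CC^2)$, i.e.\ $4$~times the generator. (Check via weights: the weights $\pm2,0$ give $\sum w^2/2=4$, versus $1$ for the fundamental.) Hence the generator pulls back to $4/2=2$ times the generator, so the induced map on $H^4(B-)$, and therefore on~$\pi_3$, is multiplication by~$2$.

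For a more hands-on alternative, factor through $\SO_4$: $\pi_3\SO_4\cong\ZZ\oplus\ZZ$ via $(S^3\times S^3)/\pm1$ acting by $v\mapsto qv\bar r$. The inclusion $\SO_3\to\SO_4$ is conjugation $q\mapsto(q,q)$, giving $(1,1)$. The map $\pi_3\SO_4\to\pi_3\SO_5\cong\ZZ$ is surjective with kernel generated by the image of $\pi_4 S^4$, i.e.\ the clutching class of $TS^4$, which is $(1,-1)$ up to sign (Euler number~$2$). So $\pi_3\SO_4\to\ZZ$ is $(a,b)\mapsto a+b$ up to sign, and $(1,1)\mapsto2$. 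Throughout, the main obstacle is sign and normalization bookkeeping: identifying the kernel class and verifying that the generator conventions for $\pi_3$ match. I would rely on the Dynkin-index/characteristic-class computation as the primary argument, since it avoids tracking those signs.
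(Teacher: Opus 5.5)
Your argument is correct, but your primary route differs from the paper's. The paper works entirely with Lie groups. After passing to Spin covers, the exceptional isomorphisms turn $\Spin_3\hookrightarrow\Spin_4\hookrightarrow\Spin_6$ into $\SU_2\xrightarrow{\Delta}\SU_2\times\SU_2\hookrightarrow\SU_4$. The composite is then $g\mapsto\mathrm{diag}(g,g)$, which is visibly twice the generator $g\mapsto\mathrm{diag}(g,1)$ of $\pi_3\SU_4$, and no characteristic classes or sign conventions enter.

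You instead pass to $H^4(B-)$ via Hurewicz and compute the Dynkin index of $\Spin_3\to\Spin_5$ by restricting $p_1$ of the vector representation, which gives $4/2=2$. This approach is more mechanical and more general: it works for any homomorphism of simply connected simple groups, read off from weights. It is also exactly the cohomological content of \autoref{thm:10} ($p_1$ is divisible by $4$ on $\BSpin_3$ but only by $2$ on $\BSpin$), i.e.\ the same $4$ versus $2$ that appears in the horizontal arrows of the left square of~\eqref{eq:19}. Its cost is the input that $\tfrac{p_1}{2}$ generates $H^4(\BSpin_5;\ZZ)$. That fact carries much of the same content; for example, it follows from $\Spin_6\cong\SU_4$, where the complexified vector representation is $\Lambda^2\CC^4$ and $c_2(\Lambda^2\CC^4)=2\,c_2(\CC^4)$.

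Your $\SO_4$ alternative shares the paper's first step, the diagonal $\Spin_3\to\Spin_4$. It then computes the stabilization through the fibration $\SO_4\to\SO_5\to S^4$ and the clutching class of $TS^4$, which is where the sign bookkeeping you mention comes in. The paper's block embedding into $\SU_4$ avoids that bookkeeping.

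One small slip: from the weights, $c_2(\mathfrak{su}_2\otimes\CC)=4\,c_2(\CC^2)$, so $p_1(\mathfrak{su}_2)=-4\,c_2(\CC^2)$ rather than $+4\,c_2(\CC^2)$. Only the absolute value matters, so your conclusion is unaffected.
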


\noindent
 Each of $\pi _3\SO_3$, $\pi _3\SO$ is infinite cyclic; the lemma asserts that
\eqref{eq:7}~maps a generator of~$\pi _3\SO_3$ to twice a generator of~$\pi
_3\SO$.  

  \begin{remark}[]\label{thm:7}
 Homotopy groups are defined using \emph{pointed} maps $S^3\to \SO_3$ and
$S^3\to \SO$, whereas changes of (stable) framings are \emph{unpointed}.  A
straightforward argument proves that $\pi _3\SO_N\to [S^3,\SO_N]$ is bijective
for all~$N\in \ZZ^{\ge3}$. 
  \end{remark}

  \begin{proof}
 Compute~\eqref{eq:7} by passing to the double cover 
  \begin{equation}\label{eq:11}
     \pi _3\Spin_3\longrightarrow \pi _3\Spin 
  \end{equation}
Under the special isomorphisms of low dimensional Spin groups, the inclusions 
  \begin{equation}\label{eq:12}
     \xymatrix@1{\Spin_3\;\ar@{^{(}->}[r] & \;\Spin_4\;\ar@{^{(}->}[r]&
     \;\Spin_6}
  \end{equation}
become the inclusions 
  \begin{equation}\label{eq:13}
     \xymatrix@1{\SU_2\;\ar@{^{(}->}[r]^<<<<<\Delta & \;\SU_2\times
     \SU_2\;\ar@{^{(}->}[r]& \;\SU_4}
  \end{equation}
where $\Delta $~is the diagonal inclusion and the second map is the block
diagonal inclusion.  This composition represents twice a generator of~$\pi
_3\SU_4$, and we are well into the stable range.
  \end{proof}

  \subsection{Orientations and Spin structures}\label{subsec:2.2}

Each is a \emph{stable} tangential structure, as witnessed by the pullback
diagrams
  \begin{equation}\label{eq:8}
     \begin{gathered} \xymatrix{\BSO_3\ar[r]^{} \ar[d]_{\pi_3} & \BSO\ar[d]^{\pi}
     \\ \BO_3\ar[r]^{} & \BO}\qquad \qquad \xymatrix{\BSpin_3\ar[r]^{}
     \ar[d]_{\pi_3} & \BSpin\ar[d]^{\pi} \\ \BO_3\ar[r]^{} & \BO}
     \end{gathered} 
  \end{equation}
The fibers of the vertical maps are\footnote{$\bmu n$~is the cyclic group of
$n^{\textnormal{th}}$~roots of unity in~$\CC$.  $K(\pi ,q)$~is an
Eilenberg--MacLane space with $\pi _qK(\pi ,q)=\pi $.}  $\bmut=\{\pm1\}$ and
$\bmut\times B\!\bmut \;\sim\; \bmut\times K(\bmut,1)$, respectively.  Of
course, there are rigid geometric models.  (The paper~\cite{KT} is a useful
reference for Spin structures, especially in low dimensions.)
 
An orientation of a smooth manifold~$M$ is a trivialization of the orientation
double cover $\hM\to M$, and $\hM\to M$ represents the Stiefel--Whitney
class~$w_1$ of the tangent bundle $TM\to M$.  Hence we sometimes use
`$w_1$-structure' in place of `orientation'.  In a similar vein, we 
use `$(w_1,w_2)$-structure' synonymously with `Spin structure'. 

  \begin{remark}[]\label{thm:8}
 Stabilization induces an isomorphism of the set of orientations and the set of
stable orientations, as in \autoref{thm:25}.  Let $M$~be a 3-manifold.  The
corresponding equivalence of groupoids of (stable) Spin structures follows from
the diagram
  \begin{equation}\label{eq:9}
     \begin{gathered} \xymatrix{P\ar@{-->}[rr]^{} \ar@{-->}[d]^{\bmut} &&
     P'\ar[d]^{\bmut} \\ \SO(M)\ar@{^{(}->}[rr]^{} \ar[dr]^{\SO_3}
     && SO'(M)\ar[dl]_<<<<<<<<{\SO_N} \\ &M} \end{gathered}  
  \end{equation}
in which a stable Spin structure $P'\xrightarrow{\;\;\Spin_N\;\;}M$, $N\gg 3$,
pulls back to a Spin structure $P\xrightarrow{\;\;\Spin_3\;\;}M$ over the
3-manifold~$M$.  (This is the rigid model analog of~\eqref{eq:3}.)  By
contrast, recall (\autoref{thm:5}) that framings and stable framings differ.
  \end{remark}

  \subsection{$\wpo$-structures}\label{subsec:2.3}

The remaining tangential structures that we introduce do not admit rigid models
in the sense of \autoref{thm:4}.  A \emph{$p_1$-structure} is a trivialization
of the first Pontrjagin class.  Restrict to oriented manifolds.  Fix models
$\SZ k$, $k\in \Znn$, for Eilenberg--MacLane spaces with nonzero homotopy group
$\pi _k\cong \ZZ$.  Also fix maps that represent the first Pontrjagin
class~$p_1$, and define $\sX_3,\sX$ as the homotopy fibers of~$p_1$ in the
diagram
  \begin{equation}\label{eq:10}
     \begin{gathered} \xymatrix@C-1pc{\sX_3\ar@{-->}[rr]^{} \ar@{-->}[d]^{} &&
     \sX\ar@{-->}[d]^{} \\ \BSO_3\ar[rr]^{} \ar[dr]^<<<<<<<{p_1} &&
     \BSO\ar[dl]_<<<<<<<<{p_1} \\ &\SZ4} \end{gathered}
  \end{equation} 
After composition with $\BSO_3\to \BO_3$ and $\BSO\to \BO$ we obtain the
pullback diagram 
  \begin{equation}\label{eq:14}
     \begin{gathered} \xymatrix{\sX_3\ar[r]^{} \ar[d]_{\wpo\mstrut _3} &
     \sX\ar[d]^{\wpo_s} \\ \BO_3\ar[r]^{} & \BO} \end{gathered} 
  \end{equation}
in which the vertical maps define the stable tangential structure~$\wpo_s$ and
its pullback to a 3-dimensional tangential structure~$\wpo\mstrut _3$.  We also
call these the `stable oriented $p_1$-structure' (or `$\SO^{p_1}$-structure')
and the `3-dimensional oriented $p_1$-structure' (or
`$\SO^{p_1}(3)$-structure').  By \autoref{thm:25} there is an equivalence
between stable and unstable structures.
 
Let $M$~be an oriented 3-manifold.  A change of oriented $p_1$-structure on~$M$
that fixes the underlying orientation is a map
  \begin{equation}\label{eq:15}
     \begin{gathered} \xymatrix@C+=2.5pc{
     &\SZ3\ar@{=}[rr]\ar[d] && \SZ3\ar[d]\\
     &\sX_3\ar[rr]^{} \ar[d]^{} &&
     \sX\ar@{-->}[d]^{} \\ 
     M\ar@{-->}[uur] \ar[ur]^<<<<<<<<<s \ar[r]^<<<<<<<{\tau _M} &     
     \BSO_3\ar[rr]^{} \ar[dr]^{p_1} &&
     \BSO\ar[dl]_<<<<<<<<{p_1} \\ &&\SZ4} \end{gathered} 
  \end{equation}
(The dotted map in the diagram shows a change of 3-dimensional structure; a map
to the other~$\SZ 3$ is a change of stable structure.)  Here $\SZ3$~is the
homotopy fiber of $\sX_3\to \BSO_3$ and of $\sX\to \BSO$; since the lower
triangle homotopy commutes, the induced map $\SZ3\to \SZ3$ is a homotopy
equivalence.  The homotopy class of a change of (stable) oriented
$p_1$-structure on~$M$ is given by a class in~$H^3(M;\ZZ)$.  

  \begin{lemma}[]\label{thm:9}
 Let $M$~be an oriented 3-manifold.  The map 
  \begin{equation}\label{eq:16}
     H^3(M;\ZZ)\longrightarrow H^3(M;\ZZ) 
  \end{equation}
from changes of 3-dimensional oriented $p_1$-structures to change of stable
oriented $p_1$-structures is an isomorphism. 
  \end{lemma}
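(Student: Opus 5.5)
The approach is to identify the map \eqref{eq:16} with the map on homotopy classes induced by the comparison of fibers in \eqref{eq:15}. By the discussion preceding the lemma, a change of 3-dimensional oriented $p_1$-structure fixing the orientation is a map $M\to\SZ3$, where $\SZ3$ is the homotopy fiber of $\sX_3\to\BSO_3$. A change of stable structure is a map $M\to\SZ3$ into the homotopy fiber of $\sX\to\BSO$. The passage from unstable to stable changes is postcomposition with the map $f\colon\SZ3\to\SZ3$ of homotopy fibers induced by the top square of \eqref{eq:15}, that is, by $\sX_3\to\sX$ covering $\BSO_3\to\BSO$. On homotopy classes, $[M,\SZ3]\cong H^3(M;\ZZ)$, and postcomposition with $f$ is multiplication by the integer $\deg f=\pi_3(f)\in\ZZ$. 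So the lemma reduces to showing that $f$ is a homotopy equivalence, i.e.\ $\pi_3(f)=\pm1$; with compatible choices of identification it is the identity.

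For this I would use the definition of $\sX_3$ and $\sX$ as homotopy fibers of $p_1$ in \eqref{eq:10}. The homotopy fiber of $\sX_3\to\BSO_3$ is canonically the loop space $\Omega\SZ4$ of the target of $p_1$, and the same holds for $\sX\to\BSO$. The comparison map $f$ is then induced by the identity of $\SZ4$, together with the homotopy witnessing commutativity of the lower triangle in \eqref{eq:10}. Concatenation with a fixed based loop (the homotopy evaluated at the basepoint) is a homotopy equivalence of $\Omega\SZ4$. Hence $f$ is a homotopy equivalence, as already asserted in the text after \eqref{eq:15}, and $\pi_3(f)$ is an isomorphism $\ZZ\to\ZZ$.

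The main point needing care, rather than a genuine obstacle, is to make the identification of fibers precise. I would verify it via the long exact sequences of the two fibrations. We have $\pi_4\BSO_3\to\pi_4\BSO$, and $p_1$ on $\pi_4$ yields the maps $\pi_4\BSO_3\to\pi_4\SZ4=\ZZ$ and $\pi_4\BSO\to\ZZ$, compatible by the triangle. The fiber $\pi_3$ sits in $\pi_4\SZ4\to\pi_3(\text{fiber})$ in each case, and the identity on $\pi_4\SZ4$ maps to $\pi_3(f)$ in a commuting ladder. Since $\pi_3$ of each fiber is the image of the same $\pi_4\SZ4=\ZZ$ via the connecting map, which is an isomorphism (the fiber is $\Omega\SZ4$), naturality forces $\pi_3(f)=\mathrm{id}$.

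This contrasts with \autoref{thm:6}. There, for framings, the fibers $\SO_3$ and $\SO$ genuinely differ and the factor~$2$ appears. Here both tangential structures are pulled back from the same map to $\SZ4$, so no such factor can arise. I would close by noting that the final bijection $[M,\SZ3]\to[M,\SZ3]$ is the group isomorphism $H^3(M;\ZZ)\to H^3(M;\ZZ)$, as claimed.
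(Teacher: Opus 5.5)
Your proof is correct and follows the paper's argument. The paper's proof is simply ``Diagram~\eqref{eq:15}'', together with the remark just before the lemma: both homotopy fibers are $\Omega\SZ4\simeq\SZ3$, and since the lower triangle in \eqref{eq:10} homotopy commutes, the induced map $\SZ3\to\SZ3$ is a homotopy equivalence, so postcomposition is an isomorphism on $[M,\SZ3]\cong H^3(M;\ZZ)$. You spell this out in more detail (concatenation with a fixed loop, and the long-exact-sequence check that $\pi_3(f)$ is the identity), but the route is the same.
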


  \begin{proof}
 Diagram~\eqref{eq:15}. 
  \end{proof}

  \begin{remark}[]\label{thm:17}
 Fix an oriented 3-manifold~$M$.  Isomorphism classes of compatible
3-dimensional $p_1$-structures on~$M$ form a torsor~$\sP(M)$ over~$H^3(M;\ZZ)$.
If $M$~is closed and connected, then $\sP(M)$~is a $\ZZ$-torsor.  It is useful
to contemplate this and related torsors in some of what follows.
  \end{remark}

  \subsection{$\wwpo$-structures}\label{subsec:2.4}

These are also called `3-dimensional Spin $p_1$-structures' and 'stable Spin
$p_1$-structures'.  Replace $\SO_3$ with $\Spin_3$ and $\SO$ with~$\Spin$
in~\S\ref{subsec:2.3} to make the constructions.  For the analog
of~\eqref{eq:14} we use the diagram 
  \begin{equation}\label{eq:17}
     \begin{gathered} \xymatrix{\sY_3\ar[r]^{} \ar[d]_{\wwpo\mstrut _3} &
     \sY\ar[d]^{\wwpo_s} \\ \BO_3\ar[r]^{} & \BO} \end{gathered} 
  \end{equation}
Here $\sY_3$~is the homotopy fiber of $p_1\:\BSpin_3\to \SZ4$ and $\sY$~is the
homotopy fiber of $p_1\:\BSpin\to \SZ4$.  The Spin version of \autoref{thm:9}
holds.  

  \begin{remark}[]\label{thm:10}
 The first Pontrjagin class~$p_1$ is divisible by~2 on~$\BSpin$ and its
pullback to~$\BSpin_3$ is divisible by~4.  This leads to cousins of the
$\wwpo$-structures.  With evident notation, a $(w_1,\,w_2, \,p_1/4)\mstrut
_3$-structure is equivalent to a 3-framing, and a $(w_1,\,w_2, \,p_1/2)\mstrut
_3$-structure is equivalent to a stable framing on manifolds of
dimension~$\le3$.
  \end{remark}

  \begin{remark}[]\label{thm:18}
 Fix a Spin 3-manifold~$M$.  Then isomorphism classes of compatible
3-dimensional $p_1/4$-structures, $p_1/2$-structures, and $p_1$-structures form
$H^3(M;\ZZ)$-torsors; there are torsor morphisms 
  \begin{equation}\label{eq:34}
     \sP_{1/4}(M)\longrightarrow \sP_{1/2}(M)\longrightarrow \sP(M) 
  \end{equation}
compatible with the group homomorphisms 
  \begin{equation}\label{eq:35}
     H^3(M;\ZZ) \xrightarrow{\;\;2\;\;}H^3(M;\ZZ)
     \xrightarrow{\;\;2\;\;}H^3(M;\ZZ) . 
  \end{equation}
Therefore, the morphisms~\eqref{eq:34} are injective and each image has
index~2.

  \end{remark}

  \subsection{Maps of tangential structures}\label{subsec:2.5}

The relevant maps are encoded in the diagram 
  \begin{gather}\label{eq:18}
      \vcenter{\footnotesize
      \xymatrix@C+3.5pc{\ast \ar[r]\ar[ddr]|{\fr_3} \ar[drrr]
     & \sY_3 \ar[r]\ar[dd]|{\wwpo \mstrut _3}\ar[drrr] & \sX_3 \ar[ddl]|{\wpo
     \mstrut _3} \ar[drrr]\\ &&& \ast \ar[r]\ar[dr]|{\fr_s} & \sY
     \ar[r]\ar[d]|{\wwpo_s} & \sX \ar[dl]|{\wpo_s} \\ &\BO_3\ar[rrr] &&&\BO}}
  \end{gather}
As for \emph{local} changes of structures, which are changes on~$M=S^3$, we
have the following.  Let $\Delta (p)$~denote the infinite cyclic group of local
changes of structure\footnote{For oriented and Spin $p_1$-structures these are
changes of the $p_1$-structure that fix the underlying orientation or Spin
structure.}~$\pi$.

  \begin{proposition}[]\label{thm:11}
 Local changes of structure are related by the commutative diagram 
  \begin{equation}\label{eq:19}
     \begin{gathered} \xymatrix@R+.5pc@C+.5pc{\Delta (\fr)_3\ar[r]^<<<<<{4}
     \ar[d]^{2} & \Delta \wwpo\mstrut _3\ar[d]^{1} \ar[r]^<<<<<1& \Delta
     \wpo\mstrut _3 \ar[d]^1\\ \Delta (\fr)_s\ar[r]^<<<<<{2} & \Delta
     \wwpo_s\ar[r]^<<<<<1 & \Delta \wpo_s} \end{gathered}
  \end{equation}
of homomorphisms of infinite cyclic groups. 
  \end{proposition}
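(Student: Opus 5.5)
We identify each local-change group with $\pi_3$ of the fiber of the relevant tangential structure and compute each map on $\pi_3$. For framings, \eqref{eq:6} and \autoref{thm:7} give $\Delta(\fr)_3\cong\pi_3\SO_3$ and $\Delta(\fr)_s\cong\pi_3\SO$. For the $p_1$-structures, diagram~\eqref{eq:15} gives $\Delta\wpo_3\cong\Delta\wpo_s\cong H^3(S^3;\ZZ)\cong\pi_3\SZ3$, and likewise in the Spin case. All the groups are therefore infinite cyclic, and every arrow is the map induced on fibers by the maps of total spaces in~\eqref{eq:18}, which lie over the maps of bases. Commutativity of~\eqref{eq:19} follows from commutativity of~\eqref{eq:18} and functoriality of homotopy fibers. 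It remains to determine each integer.

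The vertical maps labeled~$1$ come directly from Lemma~\ref{thm:9} and its Spin analog: the induced map $\SZ3\to\SZ3$ in~\eqref{eq:15} is a homotopy equivalence. The left vertical map is multiplication by~$2$ by Lemma~\ref{thm:6}. The horizontal maps labeled~$1$ are the maps $\sY_3\to\sX_3$ and $\sY\to\sX$. These are homotopy fibers of $p_1$ over $\BSpin_3\to\BSO_3$ and $\BSpin\to\BSO$, both computed against the same map to $\SZ4$. Hence the induced map on the fibers $\SZ3$ is the identity, up to the choice of representatives, which we fix compatibly.

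The real content is the map $\Delta(\fr)_s\to\Delta\wwpo_s$. A stable framing of~$S^3$ determines a $p_1$-structure because $*\to\BO$ factors through $\sY$; on fibers this is the connecting map $\Omega\BSpin\simeq\Spin\to\SZ3$ obtained by looping $p_1\colon\BSpin\to\SZ4$. On $\pi_3$, this map is identified, via $\pi_3\Spin\cong\pi_4\BSpin$, with evaluating $p_1$ on the generator of $\pi_4\BSpin\cong\pi_4\BSO$. That generator is the bundle over $S^4$ with $p_1=\pm2$, since $p_1$ is divisible by~$2$ on Spin bundles and the quaternionic line bundle realizes $p_1/2=\pm1$ (it has $c_2=\pm1$ and $p_1=-2c_2$). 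So the map is multiplication by~$2$, consistent with \autoref{thm:10}, where $p_1/2$-structures correspond to stable framings.

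For the top-left map, commutativity together with the injectivity of multiplication in~$\ZZ$ forces the value: the lower path is $2\cdot2=4$, the right vertical is~$1$, so the top map must be~$4$. As an independent check, $\pi_4\BSpin_3\cong\pi_3\SU_2$ is generated by the quaternionic line bundle, and on $\BSpin_3$ the class $p_1$ is $4$~times a generator, as in \autoref{thm:10}. Evaluated on that generator, $p_1$ equals~$\pm4$.

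We expect the main obstacle to be sign and normalization bookkeeping. We must check that the identification of the connecting map $\Omega\BSpin\to\SZ3$ with evaluation of $p_1$ on $\pi_4$ is compatible with the orientation conventions used for $\Delta$, so that all arrows carry positive integers. We also must check that the unpointed-versus-pointed issue of \autoref{thm:7} causes no trouble for $\SZ3$.
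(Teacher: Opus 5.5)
Your proof is correct, but it computes the integers by a different route than the paper.

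The paper obtains the top composite $\Delta(\fr)_3\to\Delta\wpo_3$ directly, in three steps:
\begin{enumerate}
\item It compares the Serre spectral sequences of $\SO_3\to *\to\BSO_3$ and $\SZ3\to\sX_3\to\BSO_3$. From $d_4\iota=p_1=d_4(2\kappa)$ it gets $f^*\iota=2\kappa$.
\item It picks up a second factor of $2$ from the Hurewicz map $\pi_3\SO_3\to H_3(\SO_3)$, giving $4$.
\item It shows the rightmost arrows are $1$ by a similar spectral-sequence comparison along $\BSpin_3\to\BSO_3$.
\end{enumerate}
The bottom-left arrow is then forced by commutativity; the paper only sketches a direct computation of it.

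You go the other way. You identify every fiber map with $\Omega p_1$, so on $\pi_3$ it is evaluation of $p_1$ on a generator of $\pi_4$ of the classifying space. You compute the bottom-left arrow from the divisibility of $p_1$ by $2$ on $\BSpin$ together with the quaternionic line bundle. You read off the rightmost arrows as the identity on fibers by construction, and you get the top-left arrow from commutativity. Your independent check via $p_1=4\mu$ on $\BSpin_3\simeq\mathbb{H}P^\infty$ is also sound: that space is $3$-connected, so Hurewicz is an isomorphism on $\pi_4$ and $\mu$ evaluates to $\pm1$ on the generator.

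What each route buys:
\begin{itemize}
\item Yours is more uniform and avoids the integral cohomology of $\SO_3$. It also shows that the integers $4$ and $2$ are exactly the divisibilities of $p_1$ on $\BSpin_3$ and $\BSpin$, in line with Remark~\ref{thm:10}.
\item The paper's handles the unstable row directly with $\SO_3$, without the stable computation or an explicit bundle.
\end{itemize}

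The sign bookkeeping you flag is harmless. Once the diagram commutes and all maps are injective homomorphisms of infinite cyclic groups, generators can be chosen so that every arrow is positive.
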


  \begin{proof}
 The vertical arrows are computed in \autoref{thm:6} and \autoref{thm:9},
together with the Spin version of the latter.  We compute the composite of the
top horizontal arrows from the diagram 
  \begin{equation}\label{eq:20}
     \begin{gathered} \xymatrix@C-1pc{\SO_3\ar[rr]^{f} \ar[d]^{} &&
     \SZ3\ar[d]^{} \\ \ast\ar[rr]^{f} \ar[dr]^{} &&
     \sX_3\ar[dl] \\ &\BSO_3} \end{gathered} 
  \end{equation}
This induces a pullback map of Leray spectral sequences for cohomology with
integer coefficients:  
  \begin{sseqdata}[name=first, cohomological Serre grading, classes = {draw = none},xscale=1.15]
     \class["\ZZ "](0,0)
     \class["a "](0,2)
     \class["b "](2,2)
     \class["\kappa  "](0,3)
     \class["W_3 "](3,0)
     \class["p_1 "](4,0)
     \d["d_3" {xshift=1.4pc, yshift=1.1pc}]3(0,2)
     \d["d_2" {xshift=1.4pc, yshift=1.1pc}]2(0,3)
  \end{sseqdata}  
 
  \begin{sseqdata}[name=second, cohomological Serre grading, classes = {draw = none},xscale=1.15]
     \class["\ZZ "](0,0)
     \class["\iota  "](0,3)
     \class["W_3 "](3,0)
     \class["p_1 "](4,0)
     \d["d_4" {yshift=2pc}]4(0,3)
  \end{sseqdata}  

\medskip
  \printpage[ name = first] \qquad
  \raisebox{6pc}{$\xleftarrow{\;\;\;f^*\;\;\;} $} \qquad
  \printpage[ name = second] 
\smallskip

\noindent
 Here $2a=2b=0$.  The class~$2\kappa $ survives to the $E_4$-page, and
$d_4(2\kappa )=p_1$.  Hence
  \begin{equation}\label{eq:21}
     d_4f^*(\iota ) = f^*(d_4\iota ) = f^*(p_1) = p_1, 
  \end{equation}
from which $f^*(\iota) =2\kappa $.  This implies that $f_*\:H_3(\SO_3)\to
H_3(\SZ3)$ is multiplication by~2.  Since the Hurewicz homomorphism $\pi
_3\SO_3\to H_3(\SO_3)$ is also multiplication by~2, the composite of the top
horizontal arrows is multiplication by~4.  (Recall from \autoref{thm:6} that
local changes of oriented framings are computed in~$\pi _3\SO_3$.)
 
A similar argument shows that the composition of the bottom horizontal arrows
is multiplication by~2; the Hurewicz homomorphism $\pi _3\SO\to H_3(\SO)$ is an
isomorphism.  (Logically, we do not need this step in the proof.) 
 
We claim that the rightmost horizontal arrows in~\eqref{eq:19} are
isomorphisms.  For the arrow in the top row we use the diagram 
  \begin{equation}\label{eq:23}
     \begin{gathered} \xymatrix@C-1pc{\SZ3\ar[rr]^{h} \ar[d]^{} &&
     \SZ3\ar[d]^{} \\ \sY_3\ar[rr]^{h} \ar[d]^{} && \sX_3\ar[d] \\
     \BSpin_3\ar[rr]^h&&\BSO_3} 
     \end{gathered} 
  \end{equation}
The induced map of spectral sequences is now 
 
  \begin{sseqdata}[name=third, cohomological Serre grading, classes = {draw = none},xscale=1.15]
     \class["\ZZ "](0,0)
     \class["\iota ' "](0,3)
     \class["4\mu  "](4,0)
     \d["d_4" {yshift=2pc}]4(0,3)
  \end{sseqdata}  

\medskip
  \printpage[ name = third] \qquad
  \raisebox{6pc}{$\xleftarrow{\;\;\;h^*\;\;\;} $} \qquad
  \printpage[ name = second] 
\smallskip

\noindent 
 where $4\mu =h^*(p_1)$.  (In the first spectral sequence the class~$\mu $
generates the infinite cyclic group~ $H^4(\BSO_3;\ZZ)$.)  Then
  \begin{equation}\label{eq:24}
     d_4h^*(\iota ) = h^*(d_4\iota ) = h^*(p_1) = 4\mu = d_4(\iota '), 
  \end{equation}
from which $h^*(\iota )=\iota '$.  The argument for the bottom row (stable
analog) is the same except that $p_1$~is only divisible by~2
in~$H^4(\BSpin;\ZZ)$.
  \end{proof}

  \begin{remark}[]\label{thm:19}
 The two leftmost horizontal arrows in~\eqref{eq:19} can also be computed
using~\autoref{thm:10} and \autoref{thm:18} for~$M=S^3$. 
  \end{remark}


   \section{Invertible field theories}\label{sec:4}

The basic notion is straightforward, stated here in the 3-dimensional case.  We
use notation from~\eqref{eq:26}. 

  \begin{definition}[]\label{thm:20}
 A 3-dimensional field theory 
  \begin{equation}\label{eq:36}
     A\:\Btt(\sF)\longrightarrow \tV, 
  \end{equation}
is \emph{invertible} if it factors through the Picard groupoid of complex
lines: 
  \begin{equation}\label{eq:37}
     \begin{gathered} \xymatrix@R-1pc@C+1pc{&\Line\ar[dd] \\
     \Btt(\sF)\ar@{-->}[ur]^<<<<<<<<{\widehat\alpha } \ar[dr]^<<<<<<{A} \\ & \tV}
     \end{gathered} 
  \end{equation}
   \end{definition}

\noindent
 The factorization is unique if it exists.  As is true for general field
theories, an invertible field theory may or may not be topological.  In both
the topological and nontopological cases there is a ``homological''
interpretation.  And in both cases we consider \emph{fully local} field
theories, defined on a bordism 3-category $\Bord_{\langle 0,1,2,3
\rangle}(\sF)$.  In this section we consider topological theories.
In~\S\ref{sec:5} we describe the family~\eqref{eq:31} of nontopological
invertible theories.

First, we examine the arrow~$\widehat\alpha $ in~\eqref{eq:37} more closely.
It is a symmetric monoidal functor whose codomain is a Picard groupoid.
Therefore, $\widehat\alpha $~factors through the \emph{Picard groupoid
quotient} of the domain (also known as the \emph{group completion}): 
  \begin{equation}\label{eq:38}
     \begin{gathered}
     \xymatrix@R-1pc@C+1pc{\Btt(\sF)\ar[dd]\ar[rd]^{\widehat\alpha } \\
     &\Line \\|\Btt(\sF)|\ar@{-->}[ur]^<<<<<<<<{\alpha } } \end{gathered}
  \end{equation}
The geometric realization of a Picard groupoid is an infinite loop space, or
equivalently a connective spectrum.  For example, the geometric realization of
the Picard groupoid $\Line$ is the Eilenberg--MacLane spectrum~$\Sigma H\zt$.
The geometric realization of a topological bordism category was determined
in~\cite{GMTW}; it is a \emph{Madsen--Tillmann spectrum}~\cite{MT}.  So too is
the geometric realization of a fully local higher topological bordism
category~\cite{S-P}.  In the fully local case the codomain is also replaced by
a richer spectrum.  There is a universal choice~$\Sigma ^3\ICx$, which is dual
to the sphere spectrum.  Bottom line: a 3-dimensional invertible topological
field theory is a map from a Madsen--Tillmann spectrum to~$\Sigma ^3\ICx$.
See~\cite{FH2} and the references therein for a more generous account.

  \begin{remark}[]\label{thm:21}
 \ 
 \begin{enumerate}[label=\textnormal{(\arabic*)}]

 \item Other choices of codomain are possible, but $\Sigma ^3\ICx$ enjoys a
universal property: the abelian group of isomorphism classes of 3-dimensional
invertible topological field theories with codomain~$\Sigma ^3\ICx$ is
isomorphic to the Pontrjagin dual group to $\pi _3$ of the appropriate
Madsen--Tillmann spectrum.  More colloquially, with codomain~$\Sigma ^3\ICx$
the partition function determines the invertible theory.

 \item Unitarity is not included in the definition~\eqref{eq:26} of a
Wick-rotated quantum field theory.  In the invertible case, if we impose
unitarity then the Madsen--Tillmann spectrum is replaced by a stabilized
version~\cite{FH2}.  
 \end{enumerate}
  \end{remark}

  \subsection{Some invertible topological field theories}\label{subsec:4.1}

We rely on the bordism computations in \cite[Appendix~B]{FST}.  In the cases we
consider, the corresponding groups of invertible topological field theories are
sums of finite cyclic groups.  Here we tell the partition function of
generating theories.

   \subsubsection{Framed theories}\label{subsubsec:4.1.1}
 The Madsen--Tillmann spectrum built from 3-framed 3-manifolds is the sphere
spectrum~$\SS$.  The relevant homotopy group is the \emph{3-stem} 
  \begin{equation}\label{eq:39}
     \pi _3\SS \cong \zmod{24}. 
  \end{equation}
The Lie group~$\SU_2$ with left-invariant framing represents a generator. 
 
By the general remarks above, the group of 3-framed invertible 3-dimensional
topological field theories is isomorphic to the Pontrjagin dual~$\bmu{24}$ to
$\zmod{24}$.  A generating theory\footnote{In~\cite{FST} this is called the
``topological free fermion theory''.}~$\psi $ can be constructed using the
\emph{Adams $e$-invariant}~\cite{Ad,APS} as follows.  Let $X$~be a closed
3-framed 3-manifold.  Choose\footnote{Every closed Spin 3-manifold bounds a
compact Spin 4-manifold.  Similarly, every closed oriented 3-manifold bounds a
compact oriented 4-manifold.  A 3-framing on a 3-manifold fixes a Spin
structure.}~$W$ a compact Spin 4-manifold with $\partial W=X$.  The 3-framing
on~$X$ determines a relative first Pontrjagin class $p_1(W,X)\in H^4(W,X;\ZZ)$.
The partition function is
  \begin{equation}\label{eq:40}
     \psi (X) = \exp\left( \frac{2\pi i}{48}\bigl\langle p_1(W,X),[W,X]
     \bigr\rangle \right) , 
  \end{equation}
where $[W,X]\in H_4(W,X)$ is the relative fundamental class of the orientation.
(Observe that $p_1\in H^4(\BSpin;\ZZ)$ is divisible by two.)

   \subsubsection{Spin theories}\label{subsubsec:4.1.2}
 Whereas every closed Spin 3-manifold bounds a compact Spin 4-manifold, if we
impose a 3-dimensional reduction of structure on the tangent bundle of the
4-manifold this is no longer true:
  \begin{equation}\label{eq:41}
     \pi _3\Sigma ^3\MTSpin_3\cong \zt. 
  \end{equation}
(See~\cite[\S6]{F3} for exposition of this kind of bordism and Madsen--Tillmann
spectra.)  The 3-sphere with its unique Spin structure represents the
generator.  The nontrivial invertible 3-dimensional Spin theory~$\nu $ has
the following partition function.  Let $X$~be a closed Spin 3-manifold.  Write
$X=\partial W$ for a compact Spin 4-manifold~$W$.  Then 
  \begin{equation}\label{eq:42}
     \nu (X) = \exp\left( \frac{2\pi i}{2}\Euler(W)\right) , 
  \end{equation}
where $\Euler(W)$~is the Euler number of~$W$.  If $X$~is 3-framed, then
$\nu (X)=\psi (X)^{12}$. 

  \begin{remark}[]\label{thm:22}
 For $X=S^3$ with its unique Spin structure, we can choose $W=D^4$ to be the
4-disk, and so $\nu (S^3)=-1$.  This implies that $\nu $~does \emph{not} admit
a reflection positive structure, since $S^3$~is a double and the partition
function of a double is positive in a reflection positive theory
\cite[\S4.4]{FH2}. 
  \end{remark}

   \subsubsection{$\wpo$-theories}\label{subsubsec:4.1.3}
 For rank~3 structures the relevant bordism group is\footnote{The notation
reflects `$\wpo$-structure' = `oriented $p_1$-structure'.  The topological
group~$\SO^{p_1}_3$ is defined as the loop space of the homotopy fiber of the
map $p_1\:\BSO_3\to K(\ZZ,4)$.  We use notations `$\wpo_3$' and `$\wpo_s$' to
distinguish the rank~3 structure from the stable structure, as
in~\S\ref{subsec:2.3}.}
  \begin{equation}\label{eq:43}
     \pi _3\Sigma ^3\MTSO_3^{p_1}\cong \zmod6. 
  \end{equation}
The left parallelism on~$\SU_2$ induces a $\wpo_3$-structure, which represents
a generator of~\eqref{eq:43}.  In other words, the map $\pi _3\SS\to \pi
_3\Sigma ^3\MTSO_3^{p_1}$ is surjective.  A generator~$\beta $ of the
group~$\bmu6$ of invertible 3-dimensional $\wpo_3$-theories has the following
partition function.  Let $X$~be a closed $\wpo$ 3-manifold, and write
$X=\partial W$ for a compact oriented 4-manifold~$W$.  Then
  \begin{equation}\label{eq:44}
     \beta (X) = \exp\left( \frac{2\pi i}{6}\Bigl\{ \bigl\langle p_1(W,X),[W,X]
     \bigr\rangle + 3\Euler(W) \Bigr\} \right) . 
  \end{equation}
If $X$~is 3-framed, then $\beta (X)=\psi (X)^{-4}$. 
 
Turning now to \emph{stable} $\wpo_s$-structures, the relevant bordism group is 
  \begin{equation}\label{eq:45}
     \pi _3\MTSO^{p_1}\cong \zmod3; 
  \end{equation}
the order 3 theory~$\beta ^2$ factors to a $\wpo_s$-theory that generates the
group~$\bmu3$ of invertible 3-dimensional $\wpo_s$-theories.

   \subsubsection{$\wwpo$-theories and variations}\label{subsubsec:4.1.4}
 Here we meet noncyclic bordism groups.  Namely, the relevant bordism group of
rank~3 $\wwpo_3$-structures is
  \begin{equation}\label{eq:46}
     \pi _3\Sigma ^3\MTSpin_3^{p_1}\cong \zmod{48}\;\oplus\; \zt, 
  \end{equation}
whereas the relevant bordism group of stable $\wwpo_s$-structures is 
  \begin{equation}\label{eq:47}
     \pi _3\MTSpin^{p_1} \cong \zmod{48}. 
  \end{equation}
Fix an orientation of~$S^3$ and its (unique up to isomorphism) refinement to a
Spin structure.  Let $\sF_s(S^3)$ denote the $\ZZ$-torsor of oriented stable
framings of~$S^3$, and recall from \autoref{thm:18} the $\ZZ$-torsor $\sP(S^3)$
of $p_1$-structures compatible with the Spin structure.  Then by
\autoref{thm:11} the map
  \begin{equation}\label{eq:48}
     \sF_s(S^3)\longrightarrow \sP(S^3) 
  \end{equation}
has image of index~$2$.  Let $\fp\in \sP(S^3)$ be the image of the Lie group
framing on~$S^3\cong \SU_2$.  Then $\fp-1\in \sP(S^3)$ generates $\pi
_3\MTSpin^{p_1}$.  Furthermore, it represents an element~$x$ of order~48 in
$\pi _3\Sigma ^3\MTSpin_3^{p_1}$.  The Spin 3-sphere~$S^3$ with a bounding
$p_1$-structure represents an element~$y$ of order~2 in $\pi _3\Sigma
^3\MTSpin_3^{p_1}$ that does not equal~$24x$.  The elements~$x,y$ together
generate~\eqref{eq:46}.  

  \begin{remark}[]\label{thm:23}
 The preceding follows from \cite[(B.7)]{FST}, in which the natural map 
  \begin{equation}\label{eq:49}
     \pi _3\Sigma ^3\MTSpin_3^{p_1} \longrightarrow \pi _3\Sigma
     ^3\MTSpin^{p_1}\;\oplus \;\pi _3\Sigma ^3\MTSpin_3
  \end{equation}
is shown to be the isomorphism~\eqref{eq:46}. 
  \end{remark}

Turning to the corresponding invertible field theories, whose isomorphism
classes form a group isomorphic to $\bmu{48}\times \bmut$, the pullback of~$\nu
$ to~$\Sigma ^3\MTSpin_3^{p_1}$ generates the $\bmut$~factor.  A
generator~$\lambda $ of the $\bmu{48}$~factor has the following partition
function.  Let~$X$ be a closed Spin 3-manifold equipped with a $p_1$-structure.
Choose a compact Spin 4-manifold~$W$ with $\partial W=X$.  Then 
  \begin{equation}\label{eq:50}
     \lambda (X) = \exp\left( \frac{2\pi i}{48}\bigl\langle p_1(W,X),[W,X]
     \bigr\rangle \right) . 
  \end{equation}
Comparing with~\eqref{eq:40} we see that $\lambda $~is the extension of~$\psi $
to Spin $p_1$-manifolds.

   \section{Tangential Chern--Simons theory}\label{sec:5}

In this section we construct the invertible 3-dimensional field theory~$\gamma
_c$, $c\in \RR$, that was introduced in~\eqref{eq:31}.  This theory is
\emph{nontopological}, so it is \emph{not} described by a map out of a
Madsen--Tillmann spectrum.  Rather, it is constructed using the theory of
\emph{differential cohomology}.  We begin with a lightning review of ordinary
differential cohomology.  Then we construct the theory~$\gamma _c$
of~\eqref{eq:31}, which recall is an invertible 3-dimensional theory over the
sheaf $\sFRwp\to \Man_3$.

  \subsection{Differential cocycles}\label{subsec:5.1}

A comprehensive treatment together with extensive history and referencing may
be found in~\cite{ADH}; see also the survey~\cite{D} and the references
therein.  A foundational paper is~\cite{HS}.
 
As in~\S\ref{sec:1} we work with sheaves on~$\Man_3$, though in this subsection
all sheaves extend to~$\Man$, the category of smooth manifolds and smooth maps
between them.  (See~\cite{FH1} for background.)  There are sheaves~$\Omega ^q$,
$q\in \ZZ$, whose sections over a smooth manifold~$M$ form the vector
space~$\Omega ^q(M)$ of differential $q$-forms on~$M$.  There are also
sheaves~$\Omega ^q_{\textnormal{cl}}$ of closed differential forms of
degree~$q$.  As well, there are sheaves~$Z^q$ of smooth singular cocycles with
integer coefficients.  The sheaf~$\hZ$ of \emph{differential cocycles} is
constructed as a homotopy pullback in the references at the beginning of this
subsection.  A differential cocycle has a curvature, which is a closed
differential form, and also an underlying smooth singular cocycle; this is
expressed in the diagram of sheaves
  \begin{equation}\label{eq:51}
     \begin{gathered} \xymatrix{\hZ^q\ar[r]^{\omega } \ar[d]^{\mu } & \Omega
     ^q_{\textnormal{cl}}\\ Z^q} \end{gathered} 
  \end{equation}
Suppose $M\in \Man$, $\hc\in \hZ^q(M)$, and $\tau \in C^{q-1}(M)$ is a
trivialization of~$\mu (\hc)$ in the sense that $\delta \tau =\mu (\hc)$.  Then
$\hc$~lifts\footnote{This construction is spelled out in a particular cochain
model in~ \cite[\S A.2]{FN}.  In the notation of that reference, if
$\hc=(c,h,\omega )$ is a cocycle in~$\hC(q)^q(M)$, then $\mu (\hc)=(c,h,\omega
)$~is its image in~$\hC(q-1)^q(M)$.  The trivializing cochain of~$\mu (\hc)$
has the form $\tau =(b,k,\phi )\in \hC(q-1)^{q-1}(M)$, and $\phi $~is the
desired $(q-1)$-form.} uniquely to a differential $(q-1)$-form on~$M$.  In
terms of sheaves there is a fibration
  \begin{equation}\label{eq:52}
     \begin{gathered} \xymatrix{\Omega ^{q-1}\ar[d]\\ \hZ^q\ar[d]^{\mu } \\
     Z^q } \end{gathered} 
  \end{equation}

The preceding is based on integral Eilenberg--MacLane cohomology~$H\ZZ$.  There
are differential refinements of generalized cohomology theories as well.
Particular cases are used to define (classical) Spin Chern--Simons invariants;
see~\cite{J1,FN}, for example.

  \subsection{The theory~$\gamma _c$}\label{subsec:5.2}

We first recast the classical Chern--Simons form~\cite{CS} as a differential
cocycle.  Let $G$~be a Lie group with finitely many components, and fix a
cocycle~$\lambda $ for a class (the \emph{level}) in~$H^4(BG;\ZZ)$.  Let
$\sFBN{G}$ denote the simplicial sheaf of $G$-connections.  The
\emph{Chern--Simons differential cocycle} is a map~$\Gamma $ that fits into the
diagram
  \begin{equation}\label{eq:53}
     \begin{gathered} \xymatrix@C+1pc{\sFBN G \ar[r]^{\Gamma } & \hZ^4\ar[r]^\omega
     \ar[d]^{\mu } & \Omega ^4_{\textnormal{cl}} \\ & Z^4 } \end{gathered} 
  \end{equation}
where $\omega \circ \Gamma $~is the Chern--Weil 4-form of the level~$\lambda $,
and we can assume that $\mu \circ \Gamma =\lambda $. 
 
Fix $G=\SO_3$ and let the level be the first Pontrjagin class $p_1\in
H^4(\BSO_3;\ZZ)$.  In the following diagram $\TLC$~is the Levi--Civita
connection: 
  \begin{equation}\label{eq:54}
     \begin{gathered} \xymatrix@C+1pc{\sF_{\Riem,p_1} \ar@{-->}[rr]^\phi \ar[d]
     &&\Omega ^3\ar[r]^d \ar[d] & \Omega ^4_{\textnormal{cl}}\ar@{=}[d] \\
     \sF_{\Riem} \ar[r]^{\TLC}\ar[d]_{p_1} & \sFBN{\SO_3} \ar[r]^\Gamma &
     \hZ^4\ar[r]^\omega \ar[d]^\mu  & \Omega ^4_{\textnormal{cl}}\\
     Z^4\ar@{=}[rr] && Z^4 } \end{gathered} 
  \end{equation}
The construction of~$\phi $ uses~\eqref{eq:52}: a Riemannian manifold with a
$p_1$-structure produces a 3-form whose differential is the Chern--Weil 4-form
that represents~$p_1$.  

The theory~$\gamma _c$, $c\in \RR$, is constructed by
integrating~$c\phi /24$ and exponentiating.  So if $X$~is a closed oriented
Riemannian 3-manifold with $p_1$-structure, then the partition function is
  \begin{equation}\label{eq:55}
     \gamma _c(X) = \exp\left( \frac{2\pi ic}{24}\int_{X}\phi \right) . 
  \end{equation}

  \begin{remark}[]\label{thm:29}
 As stated in \autoref{thm:28}, one should evaluate field theories in families
of bordisms.  Let us consider the theory~$\gamma _c$ in families.  First, let
$\pi \:X\to S$ be a fiber bundle of smooth manifolds whose fibers are closed
3-manifolds.  Additionally, the relative tangent bundle $T(X/S)\to X$ carries
an orientation and a $p_1$-structure, and there is a relative Riemannian
structure\footnote{This structure gives a Levi--Civita covariant derivative on
the relative tangent bundle $T(X/S)\to X$.} on~$\pi $: that is, an inner
product on $T(X/S)\to X$ together with a horizontal distribution on~$\pi $.
Then 
  \begin{equation}\label{eq:66}
     \gamma _c(X/S) = \exp\left( \frac{2\pi ic}{24}\int_{X/S}\phi \right) 
  \end{equation}
is a smooth function $\gamma _c(X/S)\:S\to \TT\subset \CC^{\times }$.  For $\pi
\:Y\to S$ a smooth fiber bundle with fibers closed 2-manifolds---equipped with
the same geometric structure as above---the imaginary 1-form 
  \begin{equation}\label{eq:67}
     \frac{2\pi ic}{24}\int_{Y/S}\phi 
  \end{equation}
is a connection form on the trivial line bundle over~$S$.
  \end{remark}

  \begin{remark}[]\label{thm:30}
 The composition $\mu \circ \Gamma \circ \TLC$ in~\eqref{eq:54} leads to a
4-dimensional invertible \emph{topological} field theory with ``integer
values''~\cite{F4}, \cite[\S5.4]{FH2}; in the framework of~\S\ref{sec:4} it is
a spectrum map 
  \begin{equation}\label{eq:68}
     \zeta \:\Sigma ^4\MTSO_4\longrightarrow \Sigma ^4H\ZZ. 
  \end{equation}
On a fiber bundle $X\to S$ of closed oriented Riemannian 3-manifolds, the
Chern--Simons theory built from $\Gamma \circ \TLC$ returns the tangential
Chern--Simons invariant $S\to \RR/\ZZ$; the topological theory~$\zeta $
in~\eqref{eq:68} only tracks the homotopy class of this map (or equivalently
the associated $\ZZ$-torsor over~$S$).  On a fiber bundle $Y\to S$ of closed
oriented Riemannian 2-manifolds, the Chern--Simons invariant is a principal
$\RZ$-bundle with connection over~$S$; the topological theory~$\zeta $ returns
the underlying principal $\RZ$-bundle without connection over~$S$ (or
equivalently the associated $\ZZ$-gerbe over~$S$).

The pullback of~\eqref{eq:68} to oriented manifolds with $p_1$-structure is
trivialized, as follows from~\eqref{eq:54}.  Let $\tau $~be the trivialization: 
  \begin{equation}\label{eq:69}
     \begin{gathered} \begin{tikzcd} \Sigma^4\text{MTSO}_4^{p_1} \arrow[r]
     \arrow[rr, bend right=30, "\bone"' pos=0.522, ""{name=B, above,
     pos=0.522}] & |[alias=T]| \Sigma^4\text{MTSO}_4 \arrow[r, "\zeta"] &
     \Sigma^4H\mathbb{Z} \arrow[Rightarrow, from=B, to=T, "\tau" right,
     shorten >=-2pt] \end{tikzcd} \end{gathered}
  \end{equation}
On a closed Riemannian $\wpo$-manifold the trivialization~$\tau $ of the
$\ZZ$-torsor $\zeta (X)$ lifts the Chern--Simons invariant from~$\RR/\ZZ$
to~$\RR$, evident from its expression as the integral~$\int_{X}\phi $.  On a
fiber bundle $Y\to S$ of closed $\wpo$ Riemannian 2-manifolds, the
trivialization~$\tau $ is a (not-necessarily-flat) section of the principal
$\RZ$-bundle with connection over~$S$ that is the Chern--Simons invariant. 
 
There is a similar trivialization of the ``integer-valued'' theory~$[\gamma
_c]$ underlying~$\gamma _c$ for any~$c\in \RR$.
  \end{remark}

   \section{The free spinor field}\label{sec:6}
 
We treat a chiral 2-dimensional spinor field (Majorana--Weyl) in three
Wick-rotated guises.  Before Wick rotation this free theory is defined as a
relativistic quantum mechanical system on 2-dimensional Minkowski spacetime.
The translation group acting on this affine space decomposes under the Lorentz
Spin group~$\Spin_{1,1}$ into a sum of left-~and right-moving translations.
The representation of $\Spin_{1,1}$ that defines the chiral spinor field is the
double cover of the left-moving translations.  The resulting free theory leads
to an irreducible 1-particle representation of the Poincar\'e group.
 
We present the Wick-rotated theory, which is an anomalous theory, in three
incarnations.

  \begin{remark}[]\label{thm:31}
 As pointed out in \autoref{thm:28}, it is important to evaluate the free
spinor field on families of manifolds/bordisms.  We use both holomorphic and
smooth families.  As will be apparent, theories differ depending on the choice
of families, as well as the nature of the output.
  \end{remark}

  \begin{remark}[]\label{thm:33}
 We comment on unitary structures but leave their detailed development to the
future.  
  \end{remark}

  \subsection{Holomorphic presentation}\label{subsec:6.2}

Perhaps the most familiar presentation of the chiral spinor field is as an
anomalous theory over the sheaf~$\sFCww\to \Man_2$, i.e., a theory of
2-dimensional Spin conformal manifolds.  An orientation plus conformal
structure on a 2-manifold~$Y$ is equivalent to a complex structure on~$Y$, and
a compatible Spin structure is then equivalent to a choice of square
root~$\Kh_Y\to Y$ of the canonical bundle $K\mstrut _Y\to Y$.  The Dirac
operator is the $\dbar$-operator coupled to~$\Kh_Y\to Y$:
  \begin{equation}\label{eq:62}
     D_Y=\dbar_Y(\Kh_Y)\:\Omega ^{1/2,\,0}_Y\longrightarrow \Omega
     ^{1/2,\,1}_{Y}, 
  \end{equation}
where $\Omega ^{1/2,\,q}_Y=\Omega ^{0,\,q}_Y(\Kh_Y)$.  This first-order
differential operator is complex skew-adjoint if $Y$~is closed.  Then the
partition function of the free spinor field is the pfaffian of the Dirac
operator, which is an element of the Pfaffian line:
  \begin{equation}\label{eq:63}
     \pfaff D_Y\in \Pfaff D_Y. 
  \end{equation}
This is the top level of an anomalous 2-dimensional field theory~$F$
over $\sFCww\to \Man_2$.

In this version of the theory it is natural to consider \emph{holomorphic}
families.  Thus to a holomorphic fiber bundle $Y\to S$ of closed 1-dimensional
complex manifolds is associated a holomorphic line bundle 
  \begin{equation}\label{eq:70}
     \Pfaff D_{Y/S}\longrightarrow S 
  \end{equation}
together with a holomorphic section $\pfaff D_{Y/S}$.  There is no hermitian
metric or connection on the line bundle~\eqref{eq:70} in this variant.  In
other words, there is no unitary structure on the theory~$F$.

  \begin{remark}[]\label{thm:26}
 The anomaly theory~$\alpha $ comes to us as a once-shifted (or
once-categorified) invertible 2-dimensional theory over the sheaf $\sFCww\to
\Man_2$; the value of~$\alpha $ on a holomorphic family $Y\to S$ is the
holomorphic Pfaffian line bundle~\eqref{eq:70}.  Apparently $\alpha $~does not
extend to an invertible conformal 3-dimensional theory, though we have not
worked out a proof of this assertion.  When we pull back from 2-dimensional
conformal structures to 2-dimensional Riemannian structures, and we replace
holomorphic families by smooth families, then there is an extension, as we
explain in~\S\ref{subsec:6.3}.
  \end{remark}

  \begin{remark}[]\label{thm:35}
 The rational Chern class of the line bundle~\eqref{eq:70} is nonzero for some
families of surfaces.  Introduce a $p_1$-structure: pull back to the sheaf
$\sFCwwp\to \Man_2$.  Then in all holomorphic families of closed surfaces, now
with $p_1$-structure, the rational Chern class does vanish.  Even more is true:
now there is a flat structure on the Pfaffian line bundle.  This line bundle
with flat structure does not depend on conformal structures.  In fact, this
flat Pfaffian line bundle is part of a 3-dimensional topological field theory
over $\sFwwp\to \Man_3$: the theory~$\lambda $ in~\eqref{eq:50}.  In the
remainder of this section we introduce Riemannian metrics to derive this flat
structure.  Not only does it exist for families of closed surfaces, but we will
see~$\lambda $ emerge as a fully local invertible 3-dimensional topological
field theory.
  \end{remark}

  \subsection{Riemannian presentation}\label{subsec:6.3}

As in~\S\ref{subsec:1.1}, let $\sFRww\to \Man_2$ be the sheaf of Riemannian
metrics plus Spin structures.  Consider the pullback of the anomalous
theory~$F$ with anomaly~$\alpha $ along the fiber bundle $\sFRww\to \sFCww$ of
sheaves over~$\Man_2$.  On a Riemannian Spin 2-manifold~$Y$, the Pfaffian line
$\Pfaff D_Y$ carries a Quillen metric.  The pullback of the anomaly
theory~$\alpha $ extends to a \emph{unitary} invertible 3-dimensional
theory~$\ha $ over the sheaf $\sFRww\to \Man_3$: the partition function
of~$\ha $ on a closed Riemannian Spin 3-manifold~$X$ is $\exp(2\pi i\xi
_X/2)$, where $\xi _X$~is the Atiyah--Patodi--Singer $\xi
$-invariant.\footnote{Recall that $\xi _X= \frac 12(\eta _X+\dim\ker D_X)$,
where $\eta _X$~is the Atiyah--Patodi--Singer $\eta $-invariant.  The
$(2,3)$-truncation of~$\ha $ is constructed in~\cite{DF}.}  The
pullback~$\hF$ of the free spinor field theory~$F$ is a \emph{boundary theory}
of~$\ha $ (as theories over $\sFRww\to \Man_3$).

In this Riemannian variant of the theory, it is natural to evaluate on
\emph{smooth} families.  Thus suppose $\pi \:Y\to S$ is a smooth fiber bundle
with a Spin structure on the relative tangent bundle $T(Y/S)\to Y$ and a
relative Riemannian structure on~$\pi $, as in \autoref{thm:29}.  Then $\ha
$~evaluates to the \emph{smooth} Pfaffian line bundle
  \begin{equation}\label{eq:71}
     \Pfaff D_{Y/S}\longrightarrow S 
  \end{equation}
equipped with its Quillen metric and compatible covariant derivative~\cite{F5}.
The Quillen metric and covariant derivative are part of a unitary structure
on~$\ha $.  The free spinor field theory~$\hF$ returns the section $\pfaff
D_{Y/S}$ as in~\eqref{eq:63}.

  \begin{remark}[]\label{thm:32}
 The anomaly theory~$\ha $ can be constructed using differential $KO$-theory.
At least at the top levels, the equivalence between the construction with Dirac
operators and the differential~$KO$ construction can be proved using geometric
variants of Atiyah--Singer index theory.
  \end{remark}

  \subsection{The Witten maneuver and the Adams $e$-invariant}\label{subsec:6.4}

Now we execute the move in~\eqref{eq:32}.  This maneuver is the extension to
invertible field theory of the Atiyah--Patodi--Singer expression
\cite[Theorem~4.14]{APS} for the Adams $e$-invariant.  (The Adams $e$-invariant
for framed manifolds appears in~\S\ref{subsubsec:4.1.1}.)  For this we pull
back along the fibration $\sFRwwp\to \sFRww$ of sheaves over~$\Man_3$ 
.  Then
the (exponentiated) Adams $e$-invariant of~$X$, now a closed 3-dimensional
Riemannian Spin manifold \emph{with a $p_1$-structure}, is
  \begin{equation}\label{eq:64}
     \exp(2\pi i\xi _X/2)\cdot \gamma _{-1/2}(X) = \exp\left( 2\pi i\left[
     \frac{\xi _X}{2} - \frac1{48}\int_{X}\phi  \right] \right) ,
  \end{equation}
where $\gmh$ is defined in~\eqref{eq:55}.  As explained in \autoref{thm:30},
the invertible theory~$\gmh$ carries a nonflat trivialization~$\tau $.  Recall
that a boundary theory of a field theory is a domain wall to the trivial
theory.  Hence a (nonflat) trivialization~$\tau $ of an invertible field theory
is equivalent to a (nonflat) invertible boundary theory~$\hta$.  Then as
theories over $\sFRwwp\to \Man_3$, we can say that $\hF\otimes \hta $ is a
boundary theory of $\ha \otimes \gmh$.  Furthermore, $\ha \otimes \gmh$ is a
\emph{topological} field theory: it factors to a theory over $\sFwwp\to
\Man_3$.  This is precisely the theory~$\lambda $ in~\eqref{eq:50}.

  \begin{remark}[]\label{thm:27}
 If we drop the unitary structure, then the 2-dimensional boundary theory
$\hF\otimes \hta$ depends only on a conformal structure, not a Riemannian
structure, and we identify it with the holomorphic theory~$F$
in~\S\ref{subsec:6.2}, now lifted along $\sFCwwp\to \sFCww$.  Explicitly,
  \begin{equation}\label{eq:65}
     F \bigm/ \sFCwwp\;\textnormal{ is a boundary theory of }\;
     \lambda \bigm/\sFwwp.
  \end{equation}
This achieves the goal set out in \autoref{thm:35}, and it is a standard
picture of the free chiral spinor field in two dimensions: a chiral
2-dimensional conformal field theory as the boundary theory of a 3-dimensional
\emph{topological} field theory.
  \end{remark}

These theories are evaluated on \emph{smooth} families of manifolds/bordisms.
Since $\lambda $~is a topological theory, its values are locally constant.
Thus, for example, to a smooth fiber bundle $Y\to S$ of closed 2-manifolds with
relative $\wwpo$-structure, the theory~$\lambda $ produces a line bundle
over~$S$ with a \emph{flat} covariant derivative.  The boundary theory $F$
returns the pfaffian section $\pfaff D_{Y/S}$.

  \begin{remark}[]\hspace{-5pt}\footnote{We thank Greg Moore for a discussion
about unitary structures.}\label{thm:34}
 The unitary structures in~\S\ref{subsec:6.3} do \emph{not} descend to unitary
structures on $F$ and $\lambda =\ha \otimes \gamma _{-1/2}$.  In fact,
$\lambda $~carries a (rather trivial) unitary structure, being a topological
theory of finite order, but it does not lift to the unitary structure on $\ha
\otimes \gamma _{-1/2}$ given by the Quillen metric and the hermitian structure
on the Chern--Simons line.  (Compare \autoref{thm:15}(4).)  In this vein, we
remark that the \emph{conformal anomaly} obstructs the descent of the unitary
structure on $\ha $ along the fibration $\sFRww\to \sFCww$.
  \end{remark}

 \bigskip\bigskip

\newcommand{\etalchar}[1]{$^{#1}$}
\providecommand{\bysame}{\leavevmode\hbox to3em{\hrulefill}\thinspace}
\providecommand{\MR}{\relax\ifhmode\unskip\space\fi MR }
\providecommand{\MRhref}[2]{%
  \href{http://www.ams.org/mathscinet-getitem?mr=#1}{#2}
}
\providecommand{\href}[2]{#2}

  \end{document}